\theoremstyle{definition}
\newtheorem{theorem}{Theorem}[section]
\newtheorem{proposition}[theorem]{Proposition}
\newtheorem{remark}[theorem]{Remark}
\numberwithin{equation}{section} 
\def\@seccntformat#1{\@ifundefined{#1@cntformat}%
	{\csname the#1\endcsname\quad}
	{\csname #1@cntformat\endcsname}
}
\newif\ifShowComments
\def\strutdepth{\dp\strutbox}
\def\druk#1{\strut\vadjust{\kern-\strutdepth
        {\vtop to \strutdepth{%
                \baselineskip\strutdepth\vss
                        \llap{\hbox{#1}\quad}\null}}}}
\title{\bf
Unit-log-symmetric models: Characterization, statistical properties and its use in analyzing internet access data
%
}
\author[1, 2]{Roberto Vila \thanks{rovig161@gmail.com}}
\author[2]{Narayanaswamy Balakrishnan,  \thanks{bala@mcmaster.ca
} }
\author[1]{Helton Saulo  \thanks{heltonsaulo@gmail.com} }
\author[1]{Peter Zörnig \thanks{peter@unb.br}}
\affil[1]{Department of Statistics, University of
	 Bras\'ilia, Bras\'ilia, Brazil}
\affil[2]{
Department of Mathematics and Statistics, McMaster University, Hamilton, Ontario, Canada}
\begin{document}
\maketitle

\begin{abstract}
{
We present here a unit-log-symmetric model based on the bivariate log-symmetric distribution.
It is a flexible family of distributions over the interval $(0, 1)$. We then discuss its mathematical properties such as stochastic representation, symmetry, modality, moments,  quantile function, entropy and maximum likelihood estimators, paying particular attention to the special cases of unit-log-normal,  unit-log-Student-$t$ and unit-log-Laplace distributions. Finally, some empirical results and practical illustrations are presented.
}
\end{abstract}
\smallskip
\noindent
{\small {\bfseries Keywords.} {Unit-log-symmetric distribution $\cdot$ Log-symmetric distribution $\cdot$ Bivariate model $\cdot$ Bounded distributions $\cdot$  MCMC.}}
\\
{\small{\bfseries Mathematics Subject Classification (2010).} {MSC 60E05 $\cdot$ MSC 62Exx $\cdot$ MSC 62Fxx.}}


\section{Introduction}
\noindent
Distribution theory has mostly focused on unbounded distributions. However, in the recent years, bounded distributions have also become a subject of interest. In many practical applications, such as in medicine, biology, economics and finance, bounded data arise naturally. Some of the most common models used in this case are the beta distribution and its generalizations, Kumaraswamy and Topp-Leone distribution. 
In order to have including flexible models for real world problems, it is often desirable to have a whole family of distributions available. For this purpose, transformations have been developed in the statistical literature which construct a bounded distribution for any given distribution over the real line $(-\infty, \infty)$ or over $(0, \infty)$, respectively, such as logistic transformation.

In this work, we study a transformation that contains the latter as a special case (for $\mu=0$ and $\sigma=1$). However, our transformation is derived from the bivariate log-symmetric distribution
which is defined for the first time in this paper.
We introduce  the unit-log-symmetric (ULS) distribution in Section \ref{Sec:2}. We pay special attention to the cases of unit-log-normal, unit-log-Student-$t$ and unit-log-Laplace distributions. In Section \ref{Sec:3}, mathematical properties of these models are studied. Stochastic representations of the ULS are given which are used to simplify the density considerably. The symmetry of the ULS density is established and then the above-mentioned special cases are explicitly formulated, and graphically illustrated. The modality is also discussed in detail. 
The unit-log-normal density may be unimodal or bimodal; the unit-log-Student-$t$ density may have a bathtub shape or possess two minimum points; the unit-log-Laplace density may be unimodal or possess two minimum points. The quantile function is studied in Subsection \ref{Quantile function} while the moments 
and the Shannon entropy in Subsections \ref{Moments} and \ref{Shannon entropy}, respectively. Finally, the maximum likelihood (ML) estimators are studied in Subsection \ref{ml_simulation_res}. In Section \ref{Sec:5}, the Monte Carlo simulation results are presented for demonstrating the performance of the estimators and for assessing the empirical distribution of the residuals. The paper ends with the application of the unit-log-normal, unit-log-Student-$t$ and unit-log-Laplace distributions for modeling an internet access data. Specifically, the three models are fitted to the proportion of 
world population using Internet (Section \ref{Sec:6}). Finally, in Section \ref{Sec:7}, some concluding remarks are made.


\section{Unit-Log-Symmetric Distribution }\label{Sec:2}
\noindent

We say that a continuous random variable $W$, with support 
$(0, 1)$, follows a unit-log-symmetric  (ULS) distribution with parameter $\sigma_\rho=\sigma\sqrt{2(1-\rho)}$, denoted by $W\sim {\rm ULS}(\sigma_\rho,g_c)$, if its probability density function (PDF) is given by
\begin{eqnarray}\label{ULS-PDF}
f_W(w;\sigma_\rho)
=
{1\over w(1-w) \sigma^2\sqrt{1-\rho^2} Z_{g_c}}\, 
\int_{0}^{\infty} 
{1\over t}\,
g_c\Biggl(
{\widetilde{t}_w^{\,2}-2\rho\widetilde{t}_w\widetilde{t}+\widetilde{t}^{\,2}
	\over 
	1-\rho^2}
\Biggr)  
{\rm d}t,  
\quad 
0<w<1,
\\[0,3cm]
\text{where} \quad 
t_w= \bigg({w\over 1-w}\bigg) t, 
\
\widetilde{t}_w
=
\log\biggl[\Bigl({t_w\over \eta}\Bigr)^{1/\sigma}\biggr], 
\
\widetilde{t}
=
\log\biggl[\Bigl({t\over \eta}\Bigr)^{1/\sigma}\biggr],
\ \eta=\exp(\mu),
\nonumber
\end{eqnarray}
with $\mu\in\mathbb{R}$, $\sigma>0$ and $\rho\in(-1,1)$. Moreover, $Z_{g_c}$ is a positive constant defined as
\begin{align}\label{partition function}
Z_{g_c}
&=
\int_{0}^{\infty}\int_{0}^{\infty}
{1\over t_1t_2\sigma^2\sqrt{1-\rho^2}}\,
g_c\Biggl(
{\widetilde{t}_1^{\,2}-2\rho\widetilde{t}_1\widetilde{t}_2+\widetilde{t}_2^{\,2}
	\over 
	1-\rho^2}
\Biggr)\, {\rm d}t_1{\rm d}t_2
=
\pi \int_{0}^{\infty}  g_c(u)\,{\rm d}u,
\end{align}
with $g_c$ being a scalar function, referred to as the density generator, which may or may not depend on extra parameters considered known
\citep{Fang1990}.

\begin{remark}
We have defined the PDF in \eqref{ULS-PDF} as a uniparameter function for two reasons: first, to avoid identification problems (see Proposition \ref{Proposition-identifiable}) and second, because the shape of this PDF depends only on the parameter $\sigma_\rho$ (see Subsection \ref{Modality}).
\end{remark}



For simplicity in presentation, we focus our attention specifically on the density generators of unit-log-normal, unit-log-Student-$t$ and unit-log-Laplace (Table \ref{table:1}), though other forms of density generators can be handled in an analogous manner.
    \begin{table}[H]
    	\caption{Normalization constants $(Z_{g_c})$ and density generators $(g_c)$ for some ULS distributions.}
    	\vspace*{0.15cm}
    	\centering
    	\begin{tabular}{llll}
    		\hline
    		Distribution
    		& $Z_{g_c}$ & $g_c$ & Parameter
    		\\ [0.5ex]
    		\noalign{\hrule height 1.0pt}
    		Unit-log-normal
    		& $2\pi$ & $\exp(-x/2)$ & $-$
    		\\ [1ex]
    		Unit-log-Student-$t$
    		& ${{\Gamma({\nu/ 2})}\nu\pi\over{\Gamma({(\nu+2)/ 2})}}$
    		& $(1+{x\over\nu})^{-(\nu+2)/ 2}$  &  $\nu>0$
    		\\ [1ex]
      		Unit-log-Laplace
      		& $\pi$  & $K_0(\sqrt{2x})$ & $-$
      		\\ [1ex]
    	\hline
    	\end{tabular}
    	\label{table:1} 
    \end{table}
Here, in Table \ref{table:1}, 
$K_0(u)=\int_0^\infty t^{-1} \exp(-t-{u^2\over 4t}) \,{\rm d}t/2$, $u>0$, is the Bessel function of the third kind
\cite[for more details on the main properties of $K_0$, one may refer to appendix of][]{Kotz2001}.
\section{Some basic properties} \label{Sec:3}
\noindent

We now discuss some mathematical properties of the unit-log-symmetric distribution stated in the last section.
For this purpose, the following definition of bivariate log-symmetric  distribution
becomes essential.

We say that a continuous random vector $(T_1,T_2)$, with $T_1$ and $T_2$ being identically distributed, follows a bivariate log-symmetric (BLS) distribution  if its joint PDF is given by 
\begin{eqnarray*}
f_{T_1,T_2}(t_1,t_2;\boldsymbol{\theta})
=
{1\over t_1t_2\sigma^2\sqrt{1-\rho^2}Z_{g_c}}\,
g_c\Biggl(
{\widetilde{t}_1^{\,2}-2\rho\widetilde{t}_1\widetilde{t}_2+\widetilde{t}_2^{\,2}
	\over 
	1-\rho^2}
\Biggr),
\quad 
t_1,t_2>0,
\\[0,3cm]
\text{where}\quad 
\widetilde{t_i}
=
\log\biggl[\Bigl({t_i\over \eta}\Bigr)^{1/\sigma}\biggr], \ \eta=\exp(\mu), \ i=1,2, \nonumber
\end{eqnarray*}
with $\boldsymbol{\theta}=(\eta,\sigma,\rho)$ being the parameter vector, $\mu\in\mathbb{R}$, $\sigma>0$, $\rho\in(-1,1)$, 
and $g_c$ and $Z_{g_c}$ are as given in \eqref{ULS-PDF} and \eqref{partition function}, respectively. 
Observe that $Z_{g_c}$ in \eqref{partition function} is
the normalization constant for $f_{T_1,T_2}$ and that the joint PDF of $(T_1, T_2)$ can be obtained as the PDF of an  exponential transformation of an elliptically symmetric vector \cite[Chapter 13, p. 591]{Bala2009}.

\subsection{Stochastic representation}
A simple observation shows that if $(T_1,T_2)\sim {\rm BLS}(\boldsymbol{\theta},g_c)$, then
\begin{align}\label{CDF-general}
	\mathbb{P}\bigg({T_1\over T_1+T_2}\leqslant w\bigg)=\mathbb{P}\bigg({T_1\over T_2}\leqslant {w\over 1-w}\bigg),  
	\quad 
	0<w<1,
\end{align}
and so,
\begin{align*}
	f_{{T_1\over T_1+T_2}}(w)
	=
	{1\over (1-w)^2}\, f_{\frac{T_1}{T_2}}\bigg({w\over 1-w}\bigg)
	=
	{1\over (1-w)^2}\, 
	\int_{0}^{\infty} 
	t f_{T_1, T_2}(t_w, t;\boldsymbol{\theta})\,  
	{\rm d}t
	=
	f_W(w;\sigma_\rho).
\end{align*}
This means that if $(T_1,T_2)\sim {\rm BLS}(\boldsymbol{\theta},g_c)$,
 the ULS random variable $W$ admits the stochastic representation
\begin{align}\label{stochastic representation}
	W={T_1\over T_1+T_2}.
\end{align}


\subsection{Characterizations}\label{Characterization of the ULS distribution}
%
From the classic stochastic representation of a bivariate
elliptical vector \cite[Subsection 13.2.3, p. 593]{Bala2009}, it follows that $(T_1,T_2)\sim {\rm BLS}(\boldsymbol{\theta},g_c)$ has the following representation:
\begin{align} \label{rep-stoch-biv-student-t}
\begin{array}{lllll}
&T_1=\eta \exp(\sigma Z_1),
\\[0,3cm]
&T_2=\eta 
\exp(\sigma[ {\rho} Z_1+\sqrt{1-\rho^2} Z_2]),
\end{array}
\end{align} 
where $Z_1=RDU_1$ and $Z_2=R\sqrt{1-D^2}U_2$ with
$U_1$, $U_2$,$R$ and $D$ being mutually independent, $\rho\in(-1,1)$, $\eta=\exp(\mu)$, $\mathbb{P}(U_i = -1) = \mathbb{P}(U_i = 1) = 1/2$, $i=1,2$, and the variables $D$ and $R$ have PDFs $f_D(d)={2/(\pi\sqrt{1-d^2})}, \ d\in(0,1)$, and 
$	f_R(r)={2r g_c(r^2)/[\int_{0}^{\infty}
	g_c(u) \, {\rm d}{u}]}, \ r>0$, respectively. 

Therefore,
$
{T_1/ T_2}
=
\exp(\sigma[(1 -{\rho}) Z_1-\sqrt{1-\rho^2} Z_2]).
$
and so we obtain
\begin{align*}
	\mathbb{P}\bigg({T_1\over T_2}\leqslant {w\over 1-w}\bigg)
	=
	\mathbb{P}(
Z_\rho
	\leqslant 
A(w)
),
\end{align*}
where 
\begin{align}\label{sigma-and-Z-rho}
Z_\rho={1\over \sqrt{2(1-\rho)}}\,[(1-{\rho}) Z_1-\sqrt{1-\rho^2} Z_2]
\end{align}
and
\begin{align}\label{def-A}
A(w)=\log\bigg[\Big({w\over 1-w}\Big)^{1/\sigma_\rho}\bigg].
\end{align}

Thus, from \eqref{CDF-general} and \eqref{stochastic representation}, the cumulative distribution function (CDF) of  $W\sim {\rm ULS}(\sigma_\rho,g_c)$,  denoted by $F_W(w;\sigma_\rho)$, can be expressed as
\begin{align}\label{general cdf formula}
F_W(w;\sigma_\rho)
=
	\mathbb{P}(
Z_\rho
\leqslant 
A(w)
),  
\quad 
0<w<1.
\end{align}

From \eqref{general cdf formula}, the following result follows immediately.
\begin{proposition}\label{Proposition-identifiable}
	If the distribution of $Z_\rho$ depends (or not) only on the parameters of the density generator $g_c$, then the ULS distribution is identifiable.
\end{proposition}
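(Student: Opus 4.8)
The plan is to obtain identifiability directly from the closed form of the CDF in \eqref{general cdf formula}. Since ${\rm ULS}(\sigma_\rho,g_c)$ is, by construction, a one-parameter family indexed by $\sigma_\rho$ (the very reason for this parametrization rather than one in $(\mu,\sigma,\rho)$; cf.\ the Remark following \eqref{partition function}), ``identifiable'' here means precisely that, with $g_c$ and any extra parameters it carries held fixed, the assignment $\sigma_\rho\mapsto{\rm ULS}(\sigma_\rho,g_c)$ is injective.

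First I would fix $g_c$ and denote by $H$ the CDF of $Z_\rho$. The hypothesis of the proposition is that $H$ does not depend on $\sigma_\rho$ — it is a function of at most the parameters of $g_c$ — and this is the only structural fact needed: it ensures that when $F_W(\,\cdot\,;\sigma_\rho)$ and $F_W(\,\cdot\,;\sigma'_\rho)$ are compared, the same $H$ appears on both sides and only the scale inside $A$, see \eqref{def-A}, changes. So if ${\rm ULS}(\sigma_\rho,g_c)$ and ${\rm ULS}(\sigma'_\rho,g_c)$ are the same distribution, then \eqref{general cdf formula} gives
\begin{align*}
H\!\Big(\tfrac{1}{\sigma_\rho}\log\tfrac{w}{1-w}\Big)
=
H\!\Big(\tfrac{1}{\sigma'_\rho}\log\tfrac{w}{1-w}\Big),
\qquad 0<w<1.
\end{align*}
Because $w\mapsto\log\!\big(\tfrac{w}{1-w}\big)$ is a strictly increasing bijection of $(0,1)$ onto $\mathbb{R}$, the substitution $x=\log\!\big(\tfrac{w}{1-w}\big)$ turns this into $H(x/\sigma_\rho)=H(x/\sigma'_\rho)$ for every $x\in\mathbb{R}$.

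Next I would deduce $\sigma_\rho=\sigma'_\rho$ from this functional equation. If, say, $\sigma_\rho<\sigma'_\rho$, then for each $x>0$ one has $0<x/\sigma'_\rho<x/\sigma_\rho$, and monotonicity of $H$ forces $H$ to be constant on $[x/\sigma'_\rho,\,x/\sigma_\rho]$; these intervals overlap as $x$ varies and their union is $(0,\infty)$, so $H$ is constant on $(0,\infty)$, and the symmetric argument with $x<0$ makes it constant on $(-\infty,0)$. Together with $H(-\infty)=0$ and $H(+\infty)=1$ this would make $Z_\rho$ degenerate at $0$, contradicting the fact that $W$, and hence $Z_\rho\overset{\mathrm d}{=}A(W)=\tfrac1{\sigma_\rho}\log\tfrac{W}{1-W}$, is absolutely continuous with density \eqref{ULS-PDF}. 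Hence $\sigma_\rho=\sigma'_\rho$, i.e.\ the family is identifiable.

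I expect the only real care to be needed in the last paragraph: one must state explicitly that identifiability is claimed for the single parameter $\sigma_\rho$ with $g_c$ fixed — so that one value of $\sigma_\rho$ arising from many triples $(\mu,\sigma,\rho)$ is not counted as a defect — and one must invoke genuine non-degeneracy of $Z_\rho$ (equivalently, non-triviality of the ULS density) to rule out the step-function case. Nothing beyond \eqref{general cdf formula} and the stated hypothesis is used, which is why the proposition phrases the hypothesis as ``depends (or not)'' on the remaining parameters: in fact, from \eqref{rep-stoch-biv-student-t}--\eqref{sigma-and-Z-rho} one checks that the law of $Z_\rho$ is that of $R$ times one coordinate of a uniform point on the circle, so it does not depend on $\rho$ at all — only on $g_c$ — and the hypothesis is automatically met for the generators in Table \ref{table:1}.
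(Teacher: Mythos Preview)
Your proof is correct and follows the paper's own route: the paper simply states that the proposition ``follows immediately'' from \eqref{general cdf formula} and gives no further argument, so you have merely supplied the details behind that one-line claim. Your use of non-degeneracy of $Z_\rho$ to exclude the step-function case is the only point where care is needed, and you handle it correctly.
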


The inverse function of $A$, denoted by $A^{-1}$, is given by
\begin{align}\label{inverse-A}
	A^{-1}(w)={1\over 1+\exp(-\sigma_\rho w)}.
\end{align}
So, using this notation, the following result follows immediately from \eqref{general cdf formula}.
\begin{proposition}[Another stochastic representation]\label{Another stochastic representation}
Given the distribution of $Z_\rho$, we have $A^{-1}(Z_\rho)\sim {\rm ULS}(\sigma_\rho,g_c)$.  On the other hand, if $W\sim {\rm ULS}(\sigma_\rho,g_c)$, then $A(W)$ and $Z_\rho$ have the same distribution.
\end{proposition}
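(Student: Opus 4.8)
The plan is to obtain both assertions directly from the CDF identity \eqref{general cdf formula} combined with the explicit inverse \eqref{inverse-A}, i.e.\ as an instance of the probability integral transform. First I would record the elementary fact that, since $\sigma_\rho=\sigma\sqrt{2(1-\rho)}>0$, the map $A(w)=\tfrac{1}{\sigma_\rho}\log\!\big(\tfrac{w}{1-w}\big)$ in \eqref{def-A} is a strictly increasing continuous bijection from $(0,1)$ onto $\mathbb{R}$, whose inverse is precisely the scaled logistic function $A^{-1}$ of \eqref{inverse-A}. This monotonicity is the only structural ingredient needed: it lets me pass between events of the form $\{A^{-1}(Z_\rho)\leqslant w\}$ and $\{Z_\rho\leqslant A(w)\}$, and between $\{A(W)\leqslant z\}$ and $\{W\leqslant A^{-1}(z)\}$, without changing probabilities.

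For the first claim, I would set $V=A^{-1}(Z_\rho)$, which by \eqref{inverse-A} takes values in $(0,1)$. For any $w\in(0,1)$, monotonicity of $A^{-1}$ gives $\{V\leqslant w\}=\{Z_\rho\leqslant A(w)\}$, hence $\mathbb{P}(V\leqslant w)=\mathbb{P}(Z_\rho\leqslant A(w))=F_W(w;\sigma_\rho)$ by \eqref{general cdf formula}. Since $V$ is supported in $(0,1)$ and its CDF agrees there with that of the ${\rm ULS}(\sigma_\rho,g_c)$ law, we conclude $V=A^{-1}(Z_\rho)\sim {\rm ULS}(\sigma_\rho,g_c)$.

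For the converse, I would take $W\sim {\rm ULS}(\sigma_\rho,g_c)$ and fix $z\in\mathbb{R}$. Using that $A$ is increasing with inverse $A^{-1}$, $\{A(W)\leqslant z\}=\{W\leqslant A^{-1}(z)\}$, so $\mathbb{P}(A(W)\leqslant z)=F_W\big(A^{-1}(z);\sigma_\rho\big)=\mathbb{P}\big(Z_\rho\leqslant A(A^{-1}(z))\big)=\mathbb{P}(Z_\rho\leqslant z)$, again by \eqref{general cdf formula}. Thus $A(W)$ and $Z_\rho$ have the same CDF on $\mathbb{R}$ and hence the same distribution.

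I do not expect a genuine obstacle here, since the proposition is in essence a reformulation of \eqref{general cdf formula}. The only points requiring a little care are: (i) checking the direction of the monotonicity of $A$, which rests on $\sigma_\rho>0$; and (ii) ensuring the set equalities are exact, which is unproblematic because $Z_\rho$ is a continuous random variable — being, by \eqref{sigma-and-Z-rho} and \eqref{rep-stoch-biv-student-t}, a linear combination of the continuous variables $Z_1,Z_2$ — so strict and non-strict inequalities agree almost surely and the CDF of $Z_\rho$ is continuous.
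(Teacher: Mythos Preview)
Your proposal is correct and matches the paper's approach: the paper simply notes that the proposition ``follows immediately from \eqref{general cdf formula}'' together with the inverse \eqref{inverse-A}, and you have written out exactly that immediate argument, verifying the monotonicity of $A$ and equating CDFs in both directions.
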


Differentiating \eqref{general cdf formula} with respect to $w$, the ULS PDF in \eqref{ULS-PDF} of $W$ is characterized as
\begin{align}\label{general pdf formula}
f_W(w;\sigma_\rho)
=
{1\over w(1-w) \sigma_\rho}\, 
f_{Z_\rho}(A(w)),  
\quad 
0<w<1,
\end{align}
where $f_{Z_\rho}$ denotes the PDF of $Z_\rho$.

This means that, the distribution of $W$ is completely determined by the distribution of $Z_\rho$.

\begin{proposition}[Symmetry]\label{point of symmetry}
	The ULS PDF in \eqref{ULS-PDF} is symmetric around $w_0 = 1/2$ provided that the PDF of  $Z_\rho$ is symmetric around $z_0=0$.
	Furthermore, in this case, the median and the mean of a ULS distribution both occur at $w_0 = 1/2$.
\end{proposition}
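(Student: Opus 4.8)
The plan is to read everything off the characterization \eqref{general pdf formula} together with the elementary identity $A(1-w)=-A(w)$. First I would note from \eqref{def-A} that
\[
A(1-w)={1\over\sigma_\rho}\log{1-w\over w}=-{1\over\sigma_\rho}\log{w\over 1-w}=-A(w),
\qquad 0<w<1,
\]
so in particular $A(1/2)=0$. Substituting $w\mapsto 1-w$ in \eqref{general pdf formula} and using the hypothesis that $f_{Z_\rho}$ is even about $0$, I obtain
\[
f_W(1-w;\sigma_\rho)
={1\over (1-w)\,w\,\sigma_\rho}\,f_{Z_\rho}\big(A(1-w)\big)
={1\over w(1-w)\,\sigma_\rho}\,f_{Z_\rho}\big(-A(w)\big)
={1\over w(1-w)\,\sigma_\rho}\,f_{Z_\rho}\big(A(w)\big)
=f_W(w;\sigma_\rho),
\]
which is precisely symmetry of the ULS PDF about $w_0=1/2$.

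For the median I would evaluate the CDF \eqref{general cdf formula} at $w=1/2$: since $A(1/2)=0$ and $Z_\rho$ has a density that is symmetric about $0$ (hence puts no mass at $0$), $F_W(1/2;\sigma_\rho)=\mathbb{P}(Z_\rho\leqslant 0)=1/2$, so $1/2$ is a median. For the mean, which is automatically finite because $W$ is supported in $(0,1)$, I would apply the change of variable $w\mapsto 1-w$ in $\mathbb{E}[W]=\int_0^1 w\,f_W(w;\sigma_\rho)\,{\rm d}w$ and invoke the symmetry just proved to get $\mathbb{E}[W]=\int_0^1(1-w)\,f_W(w;\sigma_\rho)\,{\rm d}w=1-\mathbb{E}[W]$, whence $\mathbb{E}[W]=1/2$. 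Equivalently, one may use the stochastic representation $W=A^{-1}(Z_\rho)$ from Proposition \ref{Another stochastic representation} together with $Z_\rho\stackrel{d}{=}-Z_\rho$ and the logistic identity $A^{-1}(z)+A^{-1}(-z)=1$, which follows from \eqref{inverse-A}, to see that $2\,\mathbb{E}[W]=\mathbb{E}[A^{-1}(Z_\rho)+A^{-1}(-Z_\rho)]=1$.

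I do not anticipate any real obstacle: the statement is a direct corollary of the representations \eqref{general cdf formula}–\eqref{general pdf formula}. The only points worth a sentence are the harmless measure-theoretic remark that a symmetric law with a density assigns probability $1/2$ to each half-line $(-\infty,0]$ and $[0,\infty)$, and the observation that the mean is well defined because $W$ takes values in the bounded interval $(0,1)$.
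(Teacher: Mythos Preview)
Your argument is correct and follows essentially the same route as the paper: both exploit \eqref{general pdf formula} together with the evenness of $f_{Z_\rho}$ to obtain the reflection identity (the paper writes it as $f_W(\tfrac12-w)=f_W(\tfrac12+w)$, you as $f_W(1-w)=f_W(w)$, which is the same thing), and then read off the median and mean. Your version is in fact more explicit than the paper's, which dispatches the median and mean with a bare ``the required result then follows''; your justifications via $A(1/2)=0$ and the substitution $w\mapsto 1-w$ in $\mathbb{E}[W]$ fill that gap cleanly.
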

\begin{proof}
	A simple algebraic manipulation shows that
	$$
	f_W\left({1\over 2}-w;\sigma_\rho\right)
	=
	f_W\left({1\over2}+w;\sigma_\rho\right),
	\quad
	\forall 0<w<1,
	$$
	provided that $f_{Z_\rho}(-z)=f_{Z_\rho}(z)$, $\forall z\in\mathbb{R}$.
	The required result then follows.
\end{proof}

\begin{remark}
	Since the PDFs of ${Z_1}$ and ${Z_2}$ are even functions \cite[see Proposition 1; Item ii, of][]{Saulo2022},
	and provided that $Z_1$ and $Z_2$ are independent, we have that $f_{Z_\rho}(-z)=f_{Z_\rho}(z)$, $\forall z\in\mathbb{R}$.
\end{remark}

\subsubsection{Unit-log-normal}
\label{Unit-log-normal distribution}
It is well-known that the bivariate log-normal distribution admits a stochastic representation as in \eqref{rep-stoch-biv-student-t},  where $Z_1$ and $Z_2$ are 
independent and identically distributed (i.i.d.)  standard normal  random variables \cite[Corollary 1 of][]{Saulo2022}. Consequently, $Z_\rho$ in \eqref{sigma-and-Z-rho} is standard normally distributed. So, by using \eqref{general cdf formula} and \eqref{general pdf formula}, we obtain
\begin{align}\label{ULS-PDF-normal}
F_W(w;\sigma_\rho)
=
\Phi(A(w)), 
\quad
f_W(w;\sigma_\rho)
=
{1\over w(1-w) \sigma_\rho}\, 
\phi(A(w)), 
\quad 
0<w<1,
\end{align}
where $\phi$ and $\Phi$ denote the PDF and CDF of a standard normal distribution, respectively.

\subsubsection{Unit-log-Student-$t$}
\label{Unit-log-Student-t distribution}
It is well-known that the bivariate log-Student-$t$ distribution has a stochastic representation as in \eqref{rep-stoch-biv-student-t}, where $Z_1=Z_1^* \sqrt{\nu/Q}\sim t_\nu$ and $Z_2= Z_2^* \sqrt{\nu/Q}\sim t_\nu$ \cite[Corollary 2 of][]{Saulo2022}. Here, $Q\sim\chi^2_\nu$ (chi-square with $\nu$ degrees of freedom) is independent by of $Z_1^*$ and ${\rho} Z_1^* +\sqrt{1-\rho^2} Z_2^*$, 
whereas $Z_1^*$ and $Z_2^*$ are i.i.d.  standard normal  random variables.

As
$[1/\sqrt{2(1-\rho)}][(1-{\rho}) Z_1^*-\sqrt{1-\rho^2} Z_2^*]\sim 
N(0,1)$,
we have
\begin{align}\label{distributed-t}
Z_\rho
=
{1\over \sqrt{2(1-\rho)}}\,[(1-{\rho}) Z_1^*-\sqrt{1-\rho^2} Z_2^*]
\sqrt{\nu\over Q}\sim t_\nu.
\end{align}
By combining \eqref{general cdf formula} and \eqref{general pdf formula} with \eqref{distributed-t}, we obtain 
\begin{align*}
F_W(w;\sigma_\rho)
=
F_\nu(A(w)), 
\quad
f_W(w;\sigma_\rho)
=
{1\over w(1-w)\sigma_\rho}\, 
f_\nu(A(w)), 
\quad 
0<w<1,
\end{align*}
where $f_\nu$ and $F_\nu$ denote the PDF and CDF of a Student-$t$ distribution with $\nu$ degrees of freedom, respectively.

\subsubsection{Unit-log-Laplace}\label{Unit-log-Laplace}
By using the general algorithm for simulation of symmetric
bivariate Laplace variables \cite[Subsection 5.1.4, p. 234]{Kotz2001}, it follows that the bivariate log-Laplace distribution has a stochastic representation as in \eqref{rep-stoch-biv-student-t}, where $Z_1=Z_1^* \sqrt{B}\sim {\rm Laplace}(0,1/\sqrt{2})$ and $Z_2= Z_2^* \sqrt{B}\sim {\rm Laplace}(0,1/\sqrt{2})$. Here, $B$ is a standard exponential variable independent by of $Z_1^*$ and ${\rho} Z_1^* +\sqrt{1-\rho^2} Z_2^*$, 
whereas $Z_1^*$ and $Z_2^*$ are i.i.d.  standard normal  random variables.

As
$[1/\sqrt{2(1-\rho)}][(1-{\rho}) Z_1^*-\sqrt{1-\rho^2} Z_2^*]\sim 
N(0,1)$,
we have
\begin{align}\label{distributed-Laplace}
Z_\rho
=
{1\over \sqrt{2(1-\rho)}}\,[(1-{\rho}) Z_1^*-\sqrt{1-\rho^2} Z_2^*]
\sqrt{B}\sim {\rm Laplace}(0,1/\sqrt{2}).
\end{align}
Then, by \eqref{general cdf formula}, \eqref{general pdf formula} and \eqref{distributed-t} we obtain  
\begin{align}\label{ULS-PDF-Laplace}
F_W(w;\sigma_\rho)
=
F_\ell(A(w)), 
\quad
f_W(w;\sigma_\rho)
=
{1\over w(1-w)\sigma_\rho}\, 
f_\ell(A(w)), 
\quad 
0<w<1,
\end{align}
where $F_\ell$ and $f_\ell$ denote the CDF and PDF of Laplace distribution with location parameter $0$ and scale parameter $1/\sqrt{2}$, respectively.

Figure \ref{figpdf:1} displays different shapes of the unit-log-symmetric PDFs for different choices of parameters. From this figure, we observe that the parameter $\sigma_{\rho}$ controls the shape of the unit-log-normal, unit-log-Laplace and unit-log-Student-$t$ densities. We also note that the PDFs of these three models are symmetric. Finally, from Figure \ref{figpdf:1}(d), we see the clear effect the kurtosis parameter $\nu$ has on the unit-log-Student-$t$ density.

\begin{figure}[h!]
\vspace{-0.25cm}
\centering
\subfigure[Unit-log-normal]{\includegraphics[height=6.5cm,width=6.5cm]{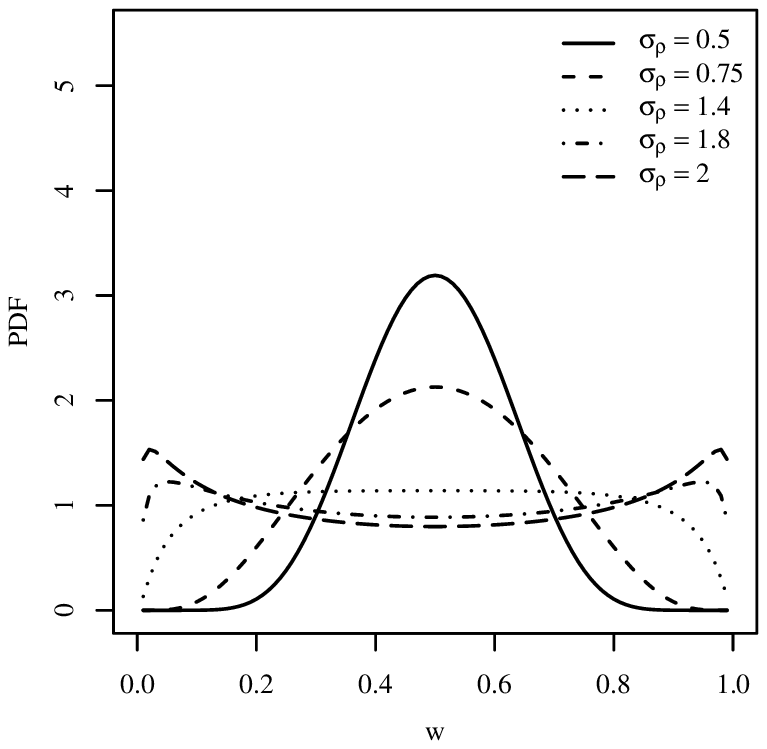}}
\subfigure[Unit-log-Laplace]{\includegraphics[height=6.5cm,width=6.5cm]{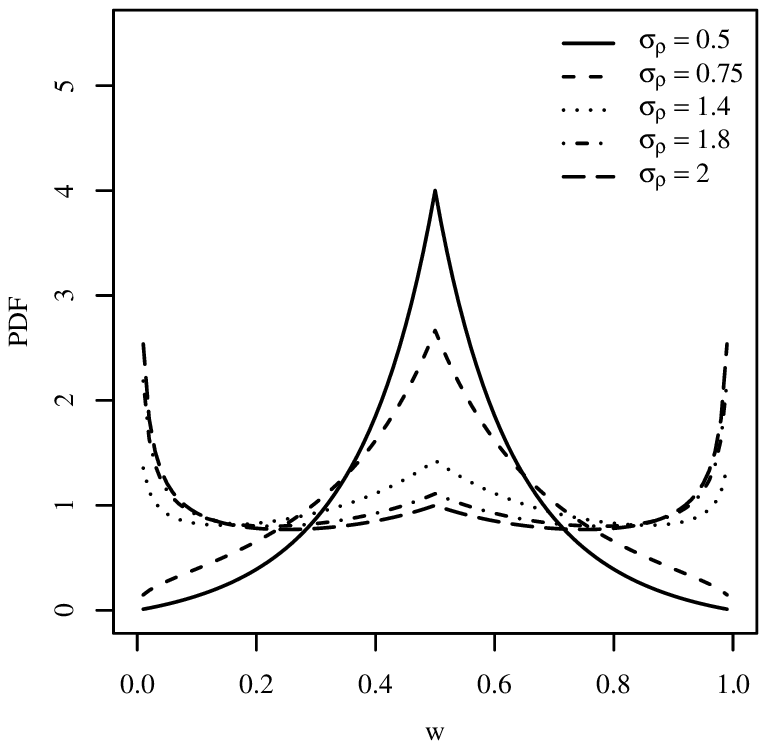}}
\subfigure[Unit-log-Student-$t$]{\includegraphics[height=6.5cm,width=6.5cm]{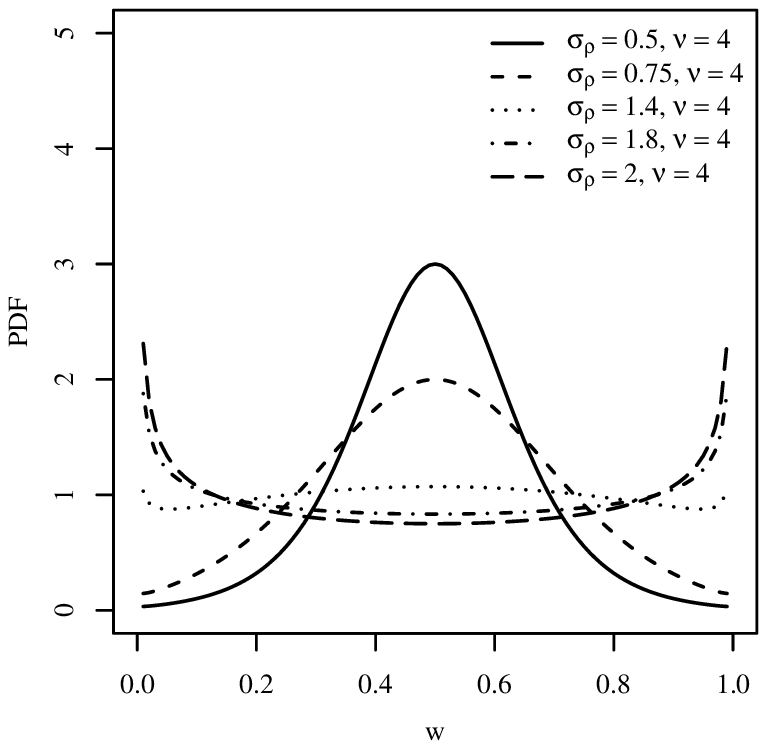}}
\subfigure[Unit-log-Student-$t$]{\includegraphics[height=6.5cm,width=6.5cm]{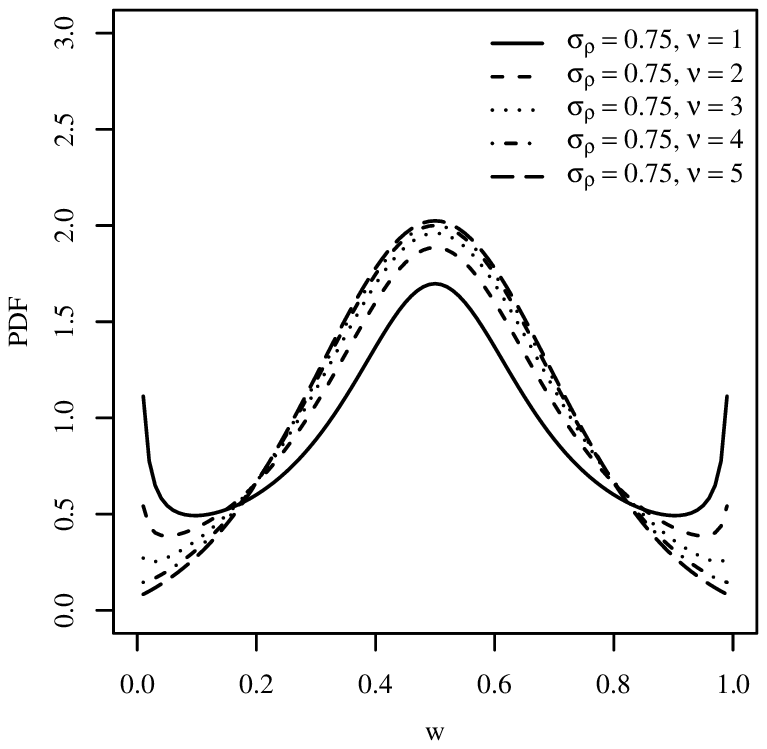}}
\vspace{-0.2cm}
\caption{{Unit-log-symmetric PDFs for some choices of parameters.}}\label{figpdf:1}
\end{figure}

\subsection{Modality}\label{Modality}

Differentiating $A$ in \eqref{def-A} with respect to $w$, we get
\begin{align}\label{a-derivatives}
	A'(w)={1\over w(1-w)\sigma_\rho}
	\quad \text{and}\quad 
	A''(w)=(2w-1)[A'(w)]^2 \sigma_\rho.
\end{align}

Suppose the derivative of $f_W(w;\sigma_\rho)$ with respect to $w$  exists at every point in its domain. In the case there is a countable number of points on its domain where $f_W$ is not differentiable, by definition, these are considered to be critical points of $f_W$.
Thus, from \eqref{general pdf formula},
we have $f_W(w;\sigma_\rho)=A'(w) f_{Z_\rho}(A(w))$ so that
\begin{align*}
	f'_W(w;\sigma_\rho)
	=
	A''(w) f_{Z_\rho}(A(w))+[A'(w)]^2f'_{Z_\rho}(A(w)).
\end{align*} 
If $f'_{Z_\rho}(z)=-r(z)f_{Z_\rho}(z)$, for some real-valued function $r$,  then
\begin{align*}
f'_W(w;\sigma_\rho)
=
\{A''(w)-[A'(w)]^2 r(A(w))\} f_{Z_\rho}(A(w))
\stackrel{\eqref{a-derivatives}}{=}
\{(2w-1)\sigma_\rho-r(A(w))\} A'(w) f_W(w;\sigma_\rho).
\end{align*}
Hence,
\begin{align*}
f'_W(w;\sigma_\rho)=0  \quad
\Longleftrightarrow\quad
(2w-1)\sigma_\rho=r(A(w)) \quad
\Longleftrightarrow\quad
\left[2 A^{-1}(A(w))-1\right]\sigma_\rho
=r(A(w)).
\end{align*} 
Therefore, by using \eqref{inverse-A}, a critical point $w$ of the ULS PDF \eqref{ULS-PDF} is such that
\begin{align*}
\left[{2\over 1+\exp(-\sigma_\rho A(w))}-1\right]\sigma_\rho
=r(A(w)), 	
\end{align*} 
or, equivalently,
\begin{align}\label{eq-critical point}
\tanh\biggl({\sigma_\rho^2 \over 2}\, y\biggr)
=
{r(\sigma_\rho y)\over \sigma_\rho},
\quad \text{with} \ y={A(w)/ \sigma_\rho}.
\end{align}

\subsubsection{Unit-log-normal}
In this case, from Subsubsection \ref{Unit-log-normal distribution}, we have $Z_\rho\sim N(0,1)$, and then $r(z)=z$. So, from \eqref{eq-critical point}, the equation for the critical points of the ULS PDF is given by	
\begin{align*}
	\tanh\biggl({\sigma_\rho^2 \over 2}\, y\biggr)
	=
	y, \quad \text{with} \ y={A(w)/ \sigma_\rho},
\end{align*}
if and only if 
\begin{align}\label{tanh-equation}
{\sigma_\rho^2 \over 2}\, y={{\rm arctanh}(y)}.
\end{align}
%
%
%
\pgfkeys{
	/pgfplots/Axis Style/.style={
		width=9.9cm, height=6cm, 
		axis x line=middle,
		axis y line=middle,
		yticklabel style={inner ysep=0pt, anchor=east},
		ytick={0,0,0},
		samples=100,
		ymin=-2.0, ymax=2.0,
		xmin=-1.25, xmax=1.25,
		domain=-1.25:1.25
	}
}
\begin{figure}[!h]
\centering
\begin{minipage}{0.5\textwidth}
	\centering
\pgfkeys{/pgf/declare function={arctanh(\x) = 0.5*(ln(1+\x)-ln(1-\x));}}
\begin{tikzpicture}
\begin{axis}[
Axis Style,
xtick={
	 -1.0 ,0.0, 1.0
},
xlabel= {$y$},
	xticklabels={
		$-1$, $0$, $1$,
	},
]
\addplot [name path=line 5] coordinates { (0,0) (0,0) };

\addplot [name path=line 0] coordinates { (0,0) (0,0) };

\addplot [name path=line 1,thick,  smooth, black]  {arctanh(\x)} node[above =24pt,pos=.63]{${\rm arctanh}(y)$};
\addplot [name path=line 2, thick, smooth, black!30] {1.1*1.1*x/2} node[above=0,pos=.95]{${\sigma_{\rho}^2\over 2}  y$};

\fill [
red!100,
name intersections={of=line 1 and line 2,total=\t}]
\foreach \s in {1,...,\t}{
	(intersection-\s) circle (2pt)
};
\end{axis}
\end{tikzpicture}
\qquad 		
\captionsetup{width=0.90\textwidth}
\caption{Point of intersection of the graphs of the functions  ${\sigma_\rho^2 y /2}$ and ${{\rm arctanh}(y)}$ when $\sigma_\rho\leqslant \sqrt{2}$.}
\label{fig:one_intersections}
\end{minipage}\hfil
\begin{minipage}{0.5\textwidth}
\centering
\pgfkeys{/pgf/declare function={arctanh(\x) = 0.5*(ln(1+\x)-ln(1-\x));}}
\begin{tikzpicture}
\begin{axis}[
Axis Style,
xtick={
	-0.875, 0.0, 0.875
},
xlabel= {$y$},
xticklabels={
	$y_-$, $0$,  $y_+$, 
},
]
\addplot [name path=line 5] coordinates { (0,0) (0,0) };

\addplot [name path=line 0] coordinates { (0,0) (0,0) };

\addplot [name path=line 1, thick, smooth, black]  {arctanh(\x)} node[above =24pt,pos=.63]{${\rm arctanh}(y)$};
\addplot [name path=line 2, thick, smooth, black!30] {1.75*1.75*x/2} node[below =-0.2pt,pos=.95]{${\sigma_{\rho}^2\over 2}  y$};

\fill [
red!100,
name intersections={of=line 1 and line 2,total=\t}]
\foreach \s in {1,...,\t}{
	(intersection-\s) circle (2pt)
};
\end{axis}
\end{tikzpicture}
\captionsetup{width=0.90\textwidth}
\caption{Points of intersection of the graphs of the functions  ${\sigma_\rho^2 y /2}$ and ${{\rm arctanh}(y)}$ when $\sigma_\rho> \sqrt{2}$.}
\label{fig:two_intersections}
\end{minipage}	
\end{figure}
\noindent
Observe that there always exists at least one solution for the above equation.  The number of solutions
of \eqref{tanh-equation} depends on whether ${\sigma_\rho}$ is larger or smaller than $\sqrt{2}$. If $\sigma_\rho\leqslant\sqrt{2}$, then \eqref{tanh-equation} has a unique solution, given by $y_0 = 0$ (Figure \ref{fig:one_intersections}), and if $\sigma_\rho>\sqrt{2}$, then \eqref{tanh-equation} has two additional non-trivial solutions (Figure \ref{fig:two_intersections}), $y_-=y_{-}(\sigma_\rho)<0$ and $y_+=y_{+}(\sigma_\rho)>0$ (which depend on $\sigma_\rho$), with $y_{+}=-y_{-}$; this means that 
$f_W(w;\sigma_\rho)$ has one critical point at $w_0=1/2$,  or three critical points, $w_-=A^{-1}(\sigma_\rho y_-)$, $w_0=1/2$ and $w_+=A^{-1}(\sigma_\rho y_+)$. 
As $\lim_{w\to 0^+} f_W(w;\sigma_\rho)=\lim_{w\to 1^-} f_W(w;\sigma_\rho)=0$, 
we then obtain the following result.
\begin{theorem}
	Let $W\sim {\rm ULS}(\sigma_\rho,g_c)$, with $g_c(x)=\exp(-x/2)$.  Then, the following holds: 
	 
	If $\sigma_\rho\leqslant\sqrt{2}$, then the ULS PDF $f_W(w;\sigma_\rho)$ is unimodal, with mode $w_0=1/2$.
	On the other hand,  
	if $\sigma_\rho>\sqrt{2}$, then the ULS PDF $f_W(w;\sigma_\rho)$ is bimodal, with modes 
	\begin{align*}
	w_-=A^{-1}(\sigma_\rho y_-)={1\over 1+\exp(-\sigma_\rho^2 y_{-})}
	\quad \text{and} \quad 
	w_+=A^{-1}(\sigma_\rho y_+)={1\over 1+\exp(-\sigma_\rho^2 y_{+})},
	\end{align*} 
	and minimum point  $w_0=1/2$, so that $0<w_-<w_0<w_+<1$.
Moreover, the graph of $f_W(w;\sigma_\rho)$ is symmetric around $w_0=1/2$  (Figure \ref{figpdf:1} a).
\end{theorem}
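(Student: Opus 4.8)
The plan is to piggyback on the critical-point analysis the paper has already done and then add a short sign-tracking argument to decide which critical points are maxima and which are minima. The symmetry statement is free: by Proposition~\ref{point of symmetry} together with the Remark following it, the standard-normal law of $Z_\rho$ is symmetric about $0$, hence $f_W(\cdot\,;\sigma_\rho)$ is symmetric about $w_0=1/2$, and in particular $f_W(1/2-t;\sigma_\rho)=f_W(1/2+t;\sigma_\rho)$ for $0<t<1/2$. For the rest I would first note that $\phi$ is smooth and $A$ is a smooth increasing bijection of $(0,1)$ onto $\mathbb{R}$, so $f_W(\cdot\,;\sigma_\rho)$ is smooth on $(0,1)$ and its critical points are exactly the zeros of $f_W'$. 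Since $Z_\rho\sim N(0,1)$ gives $r(z)=z$ (because $\phi'(z)=-z\phi(z)$), the identity displayed just before \eqref{eq-critical point} shows that
\[
f_W'(w;\sigma_\rho)=\{(2w-1)\sigma_\rho-A(w)\}\,A'(w)\,f_W(w;\sigma_\rho),
\]
and because $A'(w)=1/(w(1-w)\sigma_\rho)>0$ and $f_W(w;\sigma_\rho)>0$ on $(0,1)$, the sign of $f_W'$ coincides with the sign of $g(w):=(2w-1)\sigma_\rho-A(w)$.

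Next, through the substitution $y=A(w)/\sigma_\rho$ --- an increasing bijection $(0,1)\to\mathbb{R}$ with inverse $w=A^{-1}(\sigma_\rho y)=1/(1+\exp(-\sigma_\rho^2 y))$ by \eqref{inverse-A} --- the zeros of $g$ correspond, in an order-preserving way, to the solutions of \eqref{tanh-equation}. I would then quote the ${\rm arctanh}$-versus-line discussion already in the text: if $\sigma_\rho\leqslant\sqrt2$ the unique solution is $y=0$, so $f_W$ has the single critical point $w_0=1/2$; if $\sigma_\rho>\sqrt2$ there are exactly three solutions $y_-<0<y_+$ with $y_+=-y_-$, so $f_W$ has exactly the three critical points $w_\pm=1/(1+\exp(-\sigma_\rho^2 y_\pm))$ and $w_0=1/2$, and since $A^{-1}$ is increasing with $A^{-1}(0)=1/2$ and $y_+=-y_-$, one gets $0<w_-<w_0<w_+<1$. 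To classify these critical points I would use the boundary behaviour of $g$: as $w\to0^+$, $A(w)\to-\infty$, so $g(w)\to+\infty$, while as $w\to1^-$, $A(w)\to+\infty$, so $g(w)\to-\infty$. In the case $\sigma_\rho\leqslant\sqrt2$, $g$ has no zero on $(0,1/2)\cup(1/2,1)$, hence $g>0$ on $(0,1/2)$ and $g<0$ on $(1/2,1)$; so $f_W$ strictly increases on $(0,1/2)$ and strictly decreases on $(1/2,1)$, i.e. it is unimodal with mode $w_0=1/2$. In the case $\sigma_\rho>\sqrt2$, on the four intervals cut out by $w_-<w_0<w_+$ the sign pattern of $g$ is $+,-,+,-$, so $f_W$ increases, decreases, increases, decreases; thus $w_\pm$ are local maxima and $w_0=1/2$ is a local minimum, and by the symmetry above $f_W(w_-;\sigma_\rho)=f_W(w_+;\sigma_\rho)$, so $f_W$ is bimodal with the stated modes and minimum.

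The one step that is not purely mechanical is justifying the $+,-,+,-$ pattern, i.e. checking that $g$ genuinely changes sign (rather than merely vanishing) at each of $w_-$, $w_0$, $w_+$; this is what guarantees that $w_\pm$ are bona fide extrema. This follows from transversality of the line $y\mapsto(\sigma_\rho^2/2)y$ and the curve $y\mapsto{\rm arctanh}(y)$ at their three intersection points, visible in Figures~\ref{fig:one_intersections}--\ref{fig:two_intersections}: at $y=0$ the two slopes are $\sigma_\rho^2/2$ and $1$, which differ since $\sigma_\rho\neq\sqrt2$; at $y=\pm y_+$ the convex curve ${\rm arctanh}$ crosses the line, since it lies below the line near $0$ and tends to $+\infty$ as $y\to1^-$, so by convexity it meets the line exactly once in $(0,1)$, transversally. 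Apart from this small point, the argument is routine bookkeeping.
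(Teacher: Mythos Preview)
Your proposal is correct and follows essentially the same route as the paper: use the critical-point equation \eqref{tanh-equation}, count its solutions according to whether $\sigma_\rho\lessgtr\sqrt{2}$, and combine this with boundary behaviour and the symmetry from Proposition~\ref{point of symmetry} to classify the critical points. The paper's own argument is more compressed---it invokes only $\lim_{w\to 0^+}f_W=\lim_{w\to 1^-}f_W=0$ together with the critical-point count, without your explicit sign-tracking of $g(w)=(2w-1)\sigma_\rho-A(w)$ or the transversality check---so your version is in fact the more complete of the two.
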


\subsubsection{Unit-log-Student-$t$}
In this case, from Subsubsection \ref{Unit-log-Student-t distribution}, we have $Z_\rho\sim t_\nu$, and $r(z)=(\nu+1)z/(\nu+z^2)$. So, from  \eqref{eq-critical point}, the equation for critical points of the ULS PDF is given by
\begin{align*}
\tanh\biggl({\sigma_\rho^2 \over 2}\, y\biggr)
= 
{(\nu+1) y\over \nu+\sigma_\rho^2 y^2},
\quad \text{with} \ y={A(w)/ \sigma_\rho},
\end{align*}
or, equivalently,
\begin{align}\label{tanh-equation-1}
\tanh(z)
= 
{2(\nu+1) z\over \nu\sigma_\rho^2+4z^2},
\quad \text{with} \ z=\sigma_\rho^2y/2.
\end{align}
Note that $w_0=1/2$ is a critical point of the ULS PDF because $z_0=0$ satisfies \eqref{tanh-equation-1}.
\pgfkeys{
	/pgfplots/Axis Style/.style={
		width=9.9cm, height=6cm, 
		axis x line=middle,
		axis y line=middle,
		yticklabel style={inner ysep=0pt, anchor=east},
		ytick={-1,0,1},
		samples=100,
		ymin=-2.0, ymax=2.0,
		xmin=-6.0, xmax=6.0,
		domain=-6:6
	}
}
\begin{figure}[!h]
	\centering
	\begin{minipage}{0.5\textwidth}
		\centering
		\begin{tikzpicture}
		\begin{axis}[
		Axis Style,
		xtick={
			 0.0
		},
		xlabel= {$z$},
		]
		\addplot [name path=line 5] coordinates { (0,0) (0,0) };
		
		\addplot [name path=line 0] coordinates { (0,0) (0,0) };
		
		\addplot [name path=line 1,thick,  smooth, black]  {tanh(\x)} node[above =-2.0pt,pos=.89]{${\rm tanh}(z)$};
		\addplot [name path=line 2, thick, smooth, black!30] {2*(5+1)*x/(5*1.8*1.8+4*x*x)} node[right =15.pt,below=-24.0,pos=.25]{${2(\nu+1) z\over \nu\sigma_\rho^2+4z^2}$};
		
		\fill [
		red!100,
		name intersections={of=line 1 and line 2,total=\t}]
		\foreach \s in {1,...,\t}{
			(intersection-\s) circle (2pt)
		};
		\end{axis}
		\end{tikzpicture}
		\qquad 		
		\captionsetup{width=0.90\textwidth}
		\caption{
		The graphs of the functions ${2(\nu+1) z/(\nu\sigma_\rho^2+4z^2)}$ and ${{\rm tanh}(z)}$ have a single point of intersection.}
		\label{fig:one_intersections-1}
	\end{minipage}\hfil
	\begin{minipage}{0.5\textwidth}
		\centering
		\begin{tikzpicture}
		\begin{axis}[
		Axis Style,
		xtick={
			-2.77, 0.0,  2.77
		},
		xlabel= {$z$},
		xticklabels={
			$z_-$,  $0$,  $z_+$,
		},
		]
		\addplot [name path=line 5] coordinates { (0,0) (0,0) };
		
		\addplot [name path=line 0] coordinates { (0,0) (0,0) };
		
		\addplot [name path=line 1, thick, smooth, black]  {tanh(\x)} node[above =-2.0pt,pos=.89]{${\rm tanh}(z)$};
		\addplot [name path=line 2, thick, smooth, black!30] {2*(5+1)*x/(5*0.75*0.75+4*x*x)} node[below =2pt,pos=.80]{${2(\nu+1) z\over \nu\sigma_\rho^2+4z^2}$};
		
		\fill [
		red!100,
		name intersections={of=line 1 and line 2,total=\t}]
		\foreach \s in {1,...,\t}{
			(intersection-\s) circle (2pt)
		};
		\end{axis}
		\end{tikzpicture}
		\captionsetup{width=0.90\textwidth}
		\caption{The graphs of the functions ${2(\nu+1) z/(\nu\sigma_\rho^2+4z^2)}$ and ${{\rm tanh}(z)}$ have three points of intersection.}
		\label{fig:three_intersections}
	\end{minipage}	
\end{figure}
\noindent

Because $\lim_{z\to\pm \infty} \tanh(z) = \pm 1$, by analyzing
the graphs of functions $z\mapsto\tanh(z)$ and $2(\nu+1) z/ (\nu\sigma_\rho^2+4z^2)$, we see that \eqref{tanh-equation-1} has one solution, given by $z_0=0$, or three  solutions, $z_-< z_0=0<z_+=-z_-$, that depend on the choice of parameters $\nu$ and $\sigma_\rho$ (Figures \ref{fig:one_intersections-1} and \ref{fig:three_intersections}). In other words, $f_W(w;\sigma_\rho)$ has one critical point at $w_0=1/2$, or three critical points, $
0<w_-=A^{-1}(2z_-/\sigma_\rho)<w_0=1/2<w_+=A^{-1}(2z_+/\sigma_\rho)<1$. 
As $\lim_{w\to 0^+} f_W(w;\sigma_\rho)=\lim_{w\to 1^-} f_W(w;\sigma_\rho)=\infty$ and $w_0=1/2$ is a critical point (and point of symmetry, see Proposition \ref{point of symmetry}) of $f_W(w;\sigma_\rho)$, we readily have the following result.
\begin{theorem}
	If $W\sim {\rm ULS}(\sigma_\rho,g_c)$, with $g_c(x)=(1+{x/\nu})^{-(\nu+2)/ 2}$, then the ULS PDF $f_W(w;\sigma_\rho)$ has a bathtub shape with minimum point
	$w_0=1/2$, or
	is decreasing-increasing-decreasing-increasing with 
	minimum points 
		\begin{align*}
	w_-=A^{-1}(2z_-/\sigma_\rho)={1\over 1+\exp(-2z_-)}
	\quad \text{and} \quad 
	w_+=A^{-1}(2z_+/\sigma_\rho)={1\over 1+\exp(-2z_+)},
	\end{align*} 
	and maximum point  $w_0=1/2$.
	Moreover, the graph of the  ULS PDF is symmetric with around $w_0=1/2$ (Figures \ref{figpdf:1} c, d).
\end{theorem}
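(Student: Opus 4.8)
The plan is to read off the shape of $f_W$ from three ingredients already available: the critical-point equation \eqref{tanh-equation-1}, the boundary behaviour $\lim_{w\to0^+}f_W(w;\sigma_\rho)=\lim_{w\to1^-}f_W(w;\sigma_\rho)=\infty$ noted just above, and the symmetry of the graph about $w_0=1/2$ given by Proposition~\ref{point of symmetry}; the latter applies here because $Z_\rho\sim t_\nu$ has an even density, so $w_0=1/2$ is always a critical point (equivalently $z_0=0$ always solves \eqref{tanh-equation-1}). Since $f_\nu$ is smooth, $f_W$ is differentiable throughout $(0,1)$, so all critical points come from \eqref{tanh-equation-1}.

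First I would carry out the root count for \eqref{tanh-equation-1} on $(0,\infty)$. Writing $\phi(z)=2(\nu+1)z/(\nu\sigma_\rho^2+4z^2)$, an elementary differentiation gives $\phi'(z)=2(\nu+1)(\nu\sigma_\rho^2-4z^2)/(\nu\sigma_\rho^2+4z^2)^2$, so $\phi$ vanishes at $0$ and at $+\infty$, is positive in between, and is strictly increasing on $(0,z^\ast)$ and strictly decreasing on $(z^\ast,\infty)$ with $z^\ast=\tfrac12\sqrt{\nu}\,\sigma_\rho$; meanwhile $z\mapsto\tanh z$ is strictly increasing, strictly concave, and bounded by $1$ on $(0,\infty)$. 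Comparing the two graphs (cf.\ Figures~\ref{fig:one_intersections-1}--\ref{fig:three_intersections}) shows that they either meet only at the origin or meet at exactly one further point $z_+>0$; since both sides of \eqref{tanh-equation-1} are odd in $z$, its solution set is then $\{0\}$ or $\{z_-,0,z_+\}$ with $z_-=-z_+$. Recalling $z=\sigma_\rho^2 y/2$ and $w=A^{-1}(\sigma_\rho y)=A^{-1}(2z/\sigma_\rho)$, the critical points of $f_W$ are either the single point $w_0=1/2$ or the three points $0<w_-=A^{-1}(2z_-/\sigma_\rho)<w_0=1/2<w_+=A^{-1}(2z_+/\sigma_\rho)<1$, and \eqref{inverse-A} turns these into the stated closed forms $w_\pm=1/(1+\exp(-2z_\pm))$.

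Next I would convert the number of critical points into the monotonicity pattern. Since $f_W$ is continuous and positive on $(0,1)$ and tends to $\infty$ at both endpoints, on each maximal open subinterval containing no critical point the sign of $f_W'$ is constant, and in particular $f_W$ is decreasing on some interval $(0,\delta)$ and increasing on some interval $(1-\delta,1)$. Hence, if there is only the critical point $w_0=1/2$, then $f_W$ decreases on $(0,1/2)$ and increases on $(1/2,1)$, a bathtub shape with minimum at $1/2$; if there are three critical points, they must alternate as minimum, maximum, minimum, and since the middle one is $w_0=1/2$ we obtain the decreasing-increasing-decreasing-increasing profile with minima at $w_\pm$ and maximum at $w_0=1/2$, with $0<w_-<1/2<w_+<1$. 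The symmetry about $w_0=1/2$ is precisely Proposition~\ref{point of symmetry}.

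The genuinely delicate point --- the step on which the whole argument rests --- is establishing that \eqref{tanh-equation-1} has \emph{at most one} root in $(0,\infty)$, i.e.\ that $f_W$ has no more than three critical points; the graphical comparison in Figures~\ref{fig:one_intersections-1}--\ref{fig:three_intersections} makes this plausible but is not a proof. A rigorous treatment would study $h(z):=\tanh z-\phi(z)$ on $(0,\infty)$: determine the sign of $h$ near $0$ (governed by whether $\nu\sigma_\rho^2$ is smaller than, equal to, or larger than $2(\nu+1)$) and near $+\infty$ (where $h\to1$), and then use the single-peak structure of $\phi$ together with the strict concavity and boundedness of $\tanh$ to bound the number of sign changes of $h$. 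Everything else --- the boundary limits, the substitution $z\mapsto w$, and reading the monotonicity pattern off the critical-point count --- is routine.
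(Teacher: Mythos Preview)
Your proposal is correct and follows essentially the same route as the paper: derive the critical-point equation \eqref{tanh-equation-1}, observe that $z_0=0$ is always a root and that the remaining roots (if any) come in a pair $z_\pm=\mp z_+$ by oddness, invoke the boundary limits $f_W\to\infty$ at $0^+$ and $1^-$, and read off the monotonicity pattern using the symmetry about $w_0=1/2$ from Proposition~\ref{point of symmetry}. The paper's argument is in fact the paragraph of text immediately preceding the theorem, and it treats the ``at most one positive root'' step purely by reference to Figures~\ref{fig:one_intersections-1}--\ref{fig:three_intersections}; your explicit acknowledgement that this is the delicate point, together with your sketch of how to handle it analytically via the single-peak structure of $\phi$ and the concavity of $\tanh$, already goes beyond what the paper provides.
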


\subsubsection{Unit-log-Laplace}\label{Unit-log-Laplace-1}
From Subsubsection \ref{Unit-log-Laplace}, $Z_\rho\sim {\rm Laplace}(0,1/\sqrt{2})$, and $r(z)=\sqrt{2} z/\vert z\vert$, $z\neq 0$.  The function $f_W(w,\sigma_{\rho})$ is not differentiable at $w_0=1/2$, and this point is a critical point with $f_W(w_0,\sigma_{\rho})=4/\sigma_\rho$. 
So, Equation \eqref{eq-critical point} for critical points (excluding $w_0$) of the ULS PDF is given by
\begin{align}\label{eq-critical point-Laplace}
\tanh\biggl({\sigma_\rho^2 \over 2}\, y\biggr)
=
{\sqrt{2}\over \sigma_\rho}\, {y\over \vert y\vert}
\quad \text{with} \ y={A(w)/ \sigma_\rho},
\end{align}
which implies that
$
y_{\pm}=\pm {2} \,{\rm arctanh}({\sqrt{2}/ \sigma_{\rho}})/\sigma_{\rho}^2.
$
In other words, $f_W(w;\sigma_\rho)$ has one critical point at $w_0=1/2$, or three critical points, 
$0<w_-=A^{-1}(\sigma_\rho y_-)<w_0=1/2<w_+=A^{-1}(\sigma_\rho y_+)<1$. 
Because
\begin{align*}
	\lim_{w\to 0^+}f_W(w,\sigma_{\rho})=\lim_{w\to 1^-}f_W(w,\sigma_{\rho})=
\begin{cases}
0, & \sigma_{\rho}<\sqrt{2},
\\
1/\sqrt{2}, & \sigma_{\rho}=\sqrt{2},
\\
\infty, & \sigma_{\rho}>\sqrt{2},
\end{cases}	
\end{align*}
and $w_0=1/2$ is a critical point of symmetry (Proposition \ref{point of symmetry}) of $f_W(w;\sigma_\rho)$, we readily obtain the following result.
\begin{theorem}
	Let $W\sim {\rm ULS}(\sigma_\rho,g_c)$, with $g_c(x)=K_0(\sqrt{2x})$.  Then, the following holds: 
	
	If $\sigma_\rho\leqslant\sqrt{2}$ then the ULS PDF $f_W(w;\sigma_\rho)$ is unimodal, with mode $w_0=1/2$.
	On the other hand,  
	if $\sigma_\rho>\sqrt{2}$ then the ULS PDF $f_W(w;\sigma_\rho)$ is decreasing-increasing-decreasing-increasing with 
	minimum points 
	\begin{align*}
	&w_-=A^{-1}(\sigma_\rho y_-)={1\over 1+\exp({2}\, {\rm arctanh}({\sqrt{2}/ \sigma_{\rho}}))},
	\\[0,2cm]
	&w_+=A^{-1}(\sigma_\rho y_+)={1\over 1+\exp(-{2}\, {\rm arctanh}({\sqrt{2}/ \sigma_{\rho}}))},
	\end{align*} 
	and maximum point  $w_0=1/2$, so that $0<w_-<w_0<w_+<1$.
	Moreover,  $f_W(w;\sigma_\rho)$ is not differentiable and symmetric at $w_0=1/2$ (Figure \ref{figpdf:1} b).
\end{theorem}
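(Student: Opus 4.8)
The plan is to follow the same template as the unit-log-normal and unit-log-Student-$t$ theorems, using the reductions already assembled in Subsubsection~\ref{Unit-log-Laplace-1}. I would begin by recording three facts. First, since $Z_\rho\sim{\rm Laplace}(0,1/\sqrt2)$ has an even density that is smooth away from the origin, and $A$ in~\eqref{def-A} is a smooth increasing bijection of $(0,1)$ onto $\mathbb R$ with $A(w)=0\iff w=1/2$, the representation~\eqref{general pdf formula} shows that $f_W$ is strictly positive and differentiable on $(0,1)\setminus\{1/2\}$ and has a corner at $w_0=1/2$, which by the convention introduced before~\eqref{eq-critical point} counts as a critical point; this already gives the non-differentiability at $w_0$. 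Second, by Proposition~\ref{point of symmetry} (the Laplace density being symmetric about $0$) the graph of $f_W$ is symmetric about $w_0=1/2$, so it suffices to analyze $f_W$ on $(0,1/2)$ and reflect. Third, on $(0,1)\setminus\{1/2\}$ the sign identity $f'_W(w;\sigma_\rho)=\{(2w-1)\sigma_\rho-r(A(w))\}A'(w)f_W(w;\sigma_\rho)$ with $r(z)=\sqrt2\,{\rm sgn}(z)$ and ${\rm sgn}(A(w))={\rm sgn}(w-1/2)$ reduces the sign of $f'_W$ to that of $(2w-1)\sigma_\rho+\sqrt2$ on $(0,1/2)$ and of $(2w-1)\sigma_\rho-\sqrt2$ on $(1/2,1)$.

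Next I would split at $\sigma_\rho=\sqrt2$. If $\sigma_\rho\le\sqrt2$, then on $(0,1/2)$ one has $(2w-1)\sigma_\rho+\sqrt2>-\sigma_\rho+\sqrt2\ge0$, the inequality being strict throughout (at $\sigma_\rho=\sqrt2$ the expression is $2\sqrt2\,w>0$), so $f_W$ is strictly increasing on $(0,1/2)$; by symmetry it is strictly decreasing on $(1/2,1)$; together with continuity at $w_0$ this yields unimodality with the single mode $w_0=1/2$, which is the global maximum because $f_W$ is monotone on each half. If $\sigma_\rho>\sqrt2$, the factor $(2w-1)\sigma_\rho-\sqrt2$ vanishes at $w_+=\tfrac12+\tfrac{\sqrt2}{2\sigma_\rho}\in(1/2,1)$; I would verify, via the elementary identity $e^{-2\,{\rm arctanh}(u)}=(1-u)/(1+u)$ with $u=\sqrt2/\sigma_\rho$, that this equals the expression $A^{-1}(\sigma_\rho y_+)=1/(1+\exp(-2\,{\rm arctanh}(\sqrt2/\sigma_\rho)))$ appearing in the statement, and symmetrically $w_-=1-w_+$. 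Then $f'_W<0$ on $(1/2,w_+)$ and $f'_W>0$ on $(w_+,1)$, so $f_W$ decreases then increases on $(1/2,1)$ with local minimum $w_+$; reflecting through $w_0$ gives $f_W$ decreasing on $(0,w_-)$ and increasing on $(w_-,1/2)$ with local minimum $w_-$, while the corner $w_0=1/2$ is a local maximum (the one-sided limits of $f'_W$ at $w_0$ are $+\sqrt2\,A'(1/2)f_W(1/2)$ from the left and $-\sqrt2\,A'(1/2)f_W(1/2)$ from the right). This is exactly the decreasing-increasing-decreasing-increasing shape with $0<w_-<w_0<w_+<1$.

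To finish, I would cite the endpoint limits computed in Subsubsection~\ref{Unit-log-Laplace-1}: $\lim_{w\to0^+}f_W=\lim_{w\to1^-}f_W=0$ for $\sigma_\rho<\sqrt2$ (confirming that $w_0$ is the global mode in the unimodal case) and $=\infty$ for $\sigma_\rho>\sqrt2$ (consistent with $w_0$ being merely a local maximum between the two minima), the case $\sigma_\rho=\sqrt2$ being covered by the monotonicity argument directly. I do not expect a genuine obstacle here: every needed ingredient (the critical-point equation~\eqref{eq-critical point-Laplace}, the sign formula for $f'_W$, the symmetry of Proposition~\ref{point of symmetry}, and the boundary limits) is already available. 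The only places demanding care are the handling of the non-differentiable point $w_0=1/2$, where one must argue through one-sided monotonicity rather than through $f'_W(w_0)$, and the bookkeeping check that the linear form $w_\pm=\tfrac12\pm\tfrac{\sqrt2}{2\sigma_\rho}$ of the extrema coincides with the ${\rm arctanh}$ form stated in the theorem.
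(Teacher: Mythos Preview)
Your proposal is correct and follows essentially the same route as the paper: both use the sign formula $f'_W(w;\sigma_\rho)=\{(2w-1)\sigma_\rho-r(A(w))\}A'(w)f_W(w;\sigma_\rho)$ with $r(z)=\sqrt{2}\,{\rm sgn}(z)$, together with Proposition~\ref{point of symmetry} and the boundary limits in Subsubsection~\ref{Unit-log-Laplace-1}, to classify the shape. The only cosmetic difference is that the paper passes through the general $\tanh$ critical-point equation~\eqref{eq-critical point-Laplace} and solves for $y_\pm$, whereas you exploit that $r(A(w))$ is piecewise constant to read off the sign of $f'_W$ directly from the linear factor $(2w-1)\sigma_\rho\mp\sqrt{2}$, obtaining $w_\pm=\tfrac12\pm\tfrac{\sqrt2}{2\sigma_\rho}$ and then verifying agreement with the ${\rm arctanh}$ form; this is a mild streamlining, not a different argument.
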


\subsection{Quantile function} \label{Quantile function}
From \eqref{general cdf formula}, the $p$-quantile function $Q_W(p)$ of $W\sim {\rm ULS}(\sigma_\rho,g_c)$, for $p\in(0,1)$, satisfies
\begin{align*}
	p=F_W(Q_W(p);\sigma_\rho)
=
	\mathbb{P}(
	Z_\rho
	\leqslant 
	A(Q_W(p))
	),
\end{align*}
where $Z_\rho$ is as in \eqref{sigma-and-Z-rho}.
Hence,
\begin{align}\label{eq:qf}
	Q_{Z_\rho}(p)=A(Q_W(p))
	\quad \Longleftrightarrow \quad
	Q_W(p)=A^{-1}(Q_{Z_\rho}(p))	\stackrel{\eqref{inverse-A}}{=}{1\over 1+\exp(-\sigma_\rho Q_{Z_\rho}(p))}.
\end{align}

As the quantiles of Gaussian, Student-$t$, and Laplace variables are well-known, by using the results of Subsubsections \ref{Unit-log-normal distribution}, \ref{Unit-log-Student-t distribution}, and \ref{Unit-log-Laplace} on the distribution of $Z_\rho$, we have the quantiles as presented in Table \ref{table:quantile}.
   \begin{table}[H]
	\caption{Density generators $(g_c)$ and $p$-quantiles of some ULS distributions.}
	\vspace*{0.15cm}
	\centering
	\resizebox{\linewidth}{!}{
		\begin{tabular}{llll}
			\hline
			Distribution &  $g_c$ & $Q_W(p)$
			\\ [0.5ex]
			\noalign{\hrule height 1.0pt}
			Unit-log-normal
			&  $\exp(-x/2)$ & 
			$[1+\exp(-\sigma_\rho \Phi^{-1}(p))]^{-1}$
			\\ [1ex]
			Unit-log-Student-$t$
			& $(1+{x\over\nu})^{-(\nu+2)/ 2}$  &  
$[1+\exp(-\sigma_\rho F_\nu^{-1}(p))]^{-1}$
			\\ [1ex]
			Unit-log-Laplace
			& $K_0(\sqrt{2x})$ & 
$[1+(2p)^{-\sigma_\rho/\sqrt{2}}]^{-1} \mathds{1}_{(0,1/2]}(p)
+
[1+(2-2p)^{\sigma_\rho/\sqrt{2}}]^{-1} \mathds{1}_{[1/2, \infty)}(p)$
			\\ [1ex]
			\hline
		\end{tabular}
	}
	\label{table:quantile} 
\end{table}
\noindent
Here, $\Phi^{-1}_\nu$ and $F^{-1}_\nu$ denote the quantiles of the standard normal and Student-$t$ distributions, respectively.

\subsection{Moments}\label{Moments}
For $W\sim {\rm ULS}(\sigma_\rho,g_c)$, since $W=T_1/(T_1+T_2)$ with $(T_1,T_2)\sim {\rm BLS}(\boldsymbol{\theta},g_c)$, it is clear that $0\leqslant\mathbb{E}(W^r)\leqslant 1$, $r>0$. 
Consequently, all positive moments of $W$ always exist.

In general, the moments of $W$ admit the following representation:
\begin{align*}
	\mathbb{E}(W^r)
	=
	\mathbb{E}(
	[A^{-1}( Z_\rho)]^r
	)
	\stackrel{\eqref{inverse-A}}{=}
	\mathbb{E}\left[
	{1\over (1+\exp(-\sigma_\rho Z_\rho))^r}
	\, 
	\right], \quad r\in\mathbb{R},
\end{align*}
where $Z_\rho$ is as given in \eqref{sigma-and-Z-rho}.

Upon using binomial expansion, the negative integer moments of $W$ can be written as
\begin{align*}
\mathbb{E}(W^{-n})
=
\mathbb{E}\left[
\left(
{1+\exp(-\sigma_\rho Z_\rho)}
\right)^n
\, 
\right]
=
\sum_{k=0}^{n}\binom{n}{k} M_{Z_\rho}(-k\sigma_\rho), \quad n\in\mathbb{N},
\end{align*}
where $M_{Z_\rho}(t)$ is the moment generating function (MGF) of $Z_\rho$.
%
%

As the MGFs of Gaussian and Laplace variables are avaliable in some simple explicit forms, by using the results of Subsubsections \ref{Unit-log-normal distribution} and \ref{Unit-log-Laplace} on the distribution of $Z_\rho$, we have the  negative moments corresponding to unit-log-normal and unit-log-Laplace (Table \ref{table:moments}).
On the other hand, since $Z_\rho\sim t_\nu$ as $g(x)=(1+{x/\nu})^{-(\nu+2)/ 2}$ (Subsubsection \ref{Unit-log-Student-t distribution}), it is clear that the  negative moments
corresponding to unit-log-Student-$t$ do not exist.
   \begin{table}[H]
	\caption{Density generators $(g_c)$ and negative moments for some ULS distributions.}
	\vspace*{0.15cm}
	\centering
		\begin{tabular}{lllllll}
			\hline
			Distribution &  $g_c$ & $\mathbb{E}(W^{-n})$ & Restriction
			\\ [0.5ex]
			\noalign{\hrule height 1.0pt}
			Unit-log-normal
			&  $\exp(-x/2)$ & 
			$\sum_{k=0}^{n}\binom{n}{k} \exp({1\over 2}\, k^2\sigma_\rho^2)$ & -
			\\ [1ex]
			Unit-log-Student-$t$
			& $(1+{x\over\nu})^{-(\nu+2)/ 2}$  &  
			$\nexists$ & -
			\\ [1ex]
			Unit-log-Laplace
			& $K_0(\sqrt{2x})$ & 
			$\sum_{k=0}^{n}\binom{n}{k} (1-{1\over 2} \, k^2\sigma_\rho^2)^{-1}$  & $n<\sqrt{2}/\sigma_\rho$, $\sigma_\rho\leqslant\sqrt{2}/2$
			\\ [1ex]
			\hline
		\end{tabular}
	\label{table:moments} 
\end{table}

\subsection{Shannon entropy}\label{Shannon entropy}

The entropy of a random variable can be interpreted as the average
level of uncertainty inherent in the possible outcomes of the variable.
Following \cite{Shannon1949}, the entropy $H$ of a random variable $W\sim {\rm ULS}(\sigma_\rho,g_c)$, which takes
values in the interval $(0,1)$ and is distributed according to
$f_W(\cdot,\sigma_{\rho})$ in \eqref{general pdf formula}, is given by
\begin{align*}
	H(W)=-\mathbb{E}[\log f_W(W,\sigma_{\rho})].
\end{align*} 

Upon using \eqref{general pdf formula} and the representation in \eqref{rep-stoch-biv-student-t}, we observe that
\begin{align*}
H(W)
=
2\mu+\log(\sigma_\rho)
+
\sigma[(1+\rho)\mathbb{E}(Z_1)+\sqrt{1-\rho^2}\mathbb{E}(Z_2)]
+
H(Z_\rho),
\end{align*}
where $Z_1$ and $Z_2$ are as given in Subsection \ref{Characterization of the ULS distribution} and $Z_\rho$ is as in \eqref{sigma-and-Z-rho}. Because $Z_1$ and $Z_2$ are identically distributed \cite[Proposition 1; Item ii, of][]{Saulo2022}, the above identity can be expressed as
\begin{align*}
H(W)
=
2\mu+\log(\sigma_\rho)
+
\sigma(1+\rho+\sqrt{1-\rho^2})\mathbb{E}(Z_1)
+
H(Z_\rho).
\end{align*}

It is known that the entropies of Gaussian, Student-$t$, and Laplace variables all exist. Then, by using the results of Subsubsections \ref{Unit-log-normal distribution}, \ref{Unit-log-Student-t distribution}, and \ref{Unit-log-Laplace} on the distribution of $Z_\rho$, we obtain the entropies for the three models as presented in \ref{table:entropy}.
%
%
%
%
%
%
   \begin{table}[H]
	\caption{Density generators $(g_c)$ and entropies for some ULS distributions.}
	\vspace*{0.15cm}
	\centering
	    		\resizebox{\linewidth}{!}{
	\begin{tabular}{llll}
		\hline
		Distribution &  $g_c$ & $H(W)$
		\\ [0.5ex]
		\noalign{\hrule height 1.0pt}
		Unit-log-normal
		&  $\exp(-x/2)$ & 
		$2\mu+\log(\sigma_\rho)
		+
		{1\over 2}\, [\log(2\pi)+1]$
		\\ [1ex]
		Unit-log-Student-$t$
		& $(1+{x\over\nu})^{-(\nu+2)/ 2}$  &  
		$2\mu+\log(\sigma_\rho)
		+
		{\nu+1\over 2}\, \left[\Psi\left({\nu+1\over 2}\right)-\Psi\left({\nu\over 2}\right)\right]
		+
		\log\left[\sqrt{\nu}\, {\rm B}\left({\nu\over 2}, {1\over 2}\right)\right]$
		\\ [1ex]
		Unit-log-Laplace
		& $K_0(\sqrt{2x})$ & 
		$2\mu+\log(\sigma_\rho)
		+
		\log(\sqrt{2}{\rm e})$
		\\ [1ex]
		\hline
	\end{tabular}
	}
	\label{table:entropy} 
\end{table}
\noindent
Here,  $\Psi(x)=\Gamma'(x)/\Gamma(x)$ is the digamma function and ${\rm B}(x,y)=\Gamma(x)\Gamma(y)/\Gamma(x+y)$ is the complete beta function, with $\Gamma(t)=\int_0^\infty x^{t-1} \exp(-x) \,{\rm d}x$, $t>0$, being the gamma function.

\subsection{Maximum likelihood estimation}\label{ml_simulation_res}
\noindent

Let $\{W_{i}; i=1,\ldots,n\}$ be a univariate random sample of size $n$ from the ${\rm ULS}(\sigma_\rho,g_c)$ distribution with PDF as in \eqref{general pdf formula}, and let $w_{i}$ be the sample observations of $W_{i}$. Then, the log-likelihood function for $\sigma_\rho$ is given by
\begin{align*}
	\ell(\sigma_\rho)
	=
	-n\log(\sigma_\rho)
	-\sum_{i=1}^{n} \log[w_i(1-w_i)]
	+
	\sum_{i=1}^{n} 
	\log
	f_{Z_\rho}(A(w_i)),
\end{align*}
where the random variable $Z_\rho$ is as defined in \eqref{sigma-and-Z-rho} and $A$ is as in \eqref{def-A}.
In the case when a supremum $\widehat{\sigma}_\rho$ exists, it must be such that 
\begin{align}\label{likelihood equation}
	{\partial \ell(\sigma_\rho)\over\partial\sigma_\rho}
	\bigg\vert_{\sigma_\rho=\widehat{\sigma}_\rho}
	=0
\quad \text{and} \quad 
	{\partial^2 \ell(\sigma_\rho)\over\partial\sigma_\rho^2}
	\bigg\vert_{\sigma_\rho=\widehat{\sigma}_\rho}
	<0, 
\end{align}
with
\begin{align}\label{2-derivative}
	{\partial \ell(\sigma_\rho)\over\partial\sigma_\rho}
	&=
	-{n\over \sigma_\rho}
	-
	\frac{1}{\sigma_\rho^2}
	\sum_{i=1}^{n}
	\log\Big({w_i\over 1-w_i}\Big)
	G(A(w_i)), \nonumber
\\[0,2cm]
{\partial^2 \ell(\sigma_\rho)\over\partial\sigma_\rho^2}
&
=
-{1\over \sigma_\rho}\, \left[2\, {\partial\ell(\sigma_\rho) \over\partial\sigma_\rho}+{n\over\sigma_\rho }\right]
+
\frac{1}{\sigma_\rho^4}
\sum_{i=1}^{n}
\log^2\Big({w_i\over 1-w_i}\Big)
G'(A(w_i)).
\end{align}
We shall now adopt the notation
\begin{align}\label{def-x-rho}
	G(x)
	=
	f_{Z_\rho}'(x)/f_{Z_\rho}(x).
\end{align}
Observe now that the likelihood equation $	{\partial \ell(\sigma_\rho)/\partial\sigma_\rho}
\big\vert_{\sigma_\rho=\widehat{\sigma}_\rho}=0$ in \eqref{likelihood equation} can be written as 
\begin{align}\label{eq-sigma-rho}
	-\frac{1}{n}
	\sum_{i=1}^{n}
	\log\Big({w_i\over 1-w_i}\Big)
	G(A(w_i))
	\bigg\vert_{\sigma_\rho=\widehat{\sigma}_\rho}
	=
	\widehat{\sigma}_\rho.
\end{align}
Any nontrivial root $\widehat{\sigma}_\rho$ of the above equation is the ML estimate of ${\sigma}_\rho$ in the loose sense. When the parameter value provides the absolute maximum of the log-likelihood function, it is called an ML estimate in the strict sense.

\subsubsection{Unit-log-normal}
From Subsubsection \ref{Unit-log-normal distribution},  $Z_\rho\sim N(0,1)$. Then,  $G(x)$ in \eqref{def-x-rho} and its derivative are given by $G(x)=-x$ and $G'(x)=-1$. 
Thus, by using \eqref{eq-sigma-rho}, the ML estimate of ${\sigma}_\rho^2$ in the loose sense   is given by
\begin{align*}
	\widehat{\sigma}_\rho^2
	=
	{\frac{1}{n}
		\sum_{i=1}^{n}
		\log^2\Big({w_i\over 1-w_i}\Big)}.
\end{align*}
From  \eqref{2-derivative}, it is then easy to verify that ${\partial^2 \ell(\sigma_\rho)/\partial\sigma_\rho^2}
\big\vert_{\sigma_\rho=\widehat{\sigma}_\rho}<0$, and so $\widehat{\sigma}_\rho^2$ is an ML estimate in the strict sense.

Let $\widehat{\mathcal{O}}_\rho^2$ be the corresponding ML estimator of ${\sigma}_\rho^2$. By using Proposition \ref{Another stochastic representation},  we have $A(W)=Z_\rho\sim N(0,1)$, for $W\sim {\rm ULS}(\sigma_\rho,g_c)$. Then (for $k=1,2,\ldots$),
\begin{align}\label{Unbiasedness}
\mathbb{E}\big[(\widehat{\mathcal{O}}_\rho^2)^k\big]
=
\mathbb{E}\left[\log^{2k}\Big({W\over 1-W}\Big)\right]
=
{\sigma}_\rho^{2k}
\mathbb{E}\left[A^{2k}(W)\right] 
=
{\sigma}_\rho^{2k}
\mathbb{E}\left(Z_\rho^{2k}\right)
=
{\sigma}_\rho^{2k}\, {(2k)!\over 2^k k!}.
\end{align}
\begin{proposition}
Let $W_1, \ldots, W_n$ be independent and identically distributed random variables with the unit-log-normal distribution in \eqref{ULS-PDF-normal}, where $\sigma_\rho$ is unknown. Then, the ML estimator $\widehat{\mathcal{O}}_\rho^2$ of $\sigma_\rho^2$  is
\begin{enumerate}
\item unbiased;
\item consistent;
\item asymptotically normal; that is, $\sqrt{n}(\widehat{\mathcal{O}}_\rho^2- {\sigma}_\rho^2)$ converges in distribution to a normal distribution with
mean zero and variance $2\sigma_\rho^2$;
\item efficient.
\end{enumerate}
\end{proposition}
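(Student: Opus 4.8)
The plan is to reduce all four assertions to the classical sampling theory of the variance of a centred Gaussian. By Proposition~\ref{Another stochastic representation}, for each $W_i$ drawn from the unit-log-normal law \eqref{ULS-PDF-normal} one has $A(W_i)\sim N(0,1)$; equivalently, the log-odds $X_i:=\log\!\big(W_i/(1-W_i)\big)=\sigma_\rho\,A(W_i)$ are i.i.d.\ $N(0,\sigma_\rho^2)$. Hence
\[
\widehat{\mathcal{O}}_\rho^2=\frac1n\sum_{i=1}^n X_i^2
\qquad\text{and}\qquad
\frac{n\,\widehat{\mathcal{O}}_\rho^2}{\sigma_\rho^2}=\sum_{i=1}^n\Big(\frac{X_i}{\sigma_\rho}\Big)^2\sim\chi^2_n,
\]
so everything follows from this $\chi^2$-representation together with the moment identities already recorded in \eqref{Unbiasedness}.

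Unbiasedness is the case $k=1$ of \eqref{Unbiasedness}: $\mathbb{E}\big(\widehat{\mathcal{O}}_\rho^2\big)=\mathbb{E}(X_1^2)=\sigma_\rho^2$. For consistency I would observe that the $X_i^2$ are i.i.d.\ with finite mean $\sigma_\rho^2$ and variance $\mathrm{Var}(X_1^2)=\mathbb{E}(X_1^4)-\sigma_\rho^4=2\sigma_\rho^4$ (the value $\mathbb{E}(X_1^4)=3\sigma_\rho^4$ being the case $k=2$ of \eqref{Unbiasedness}), so that $\mathrm{Var}(\widehat{\mathcal{O}}_\rho^2)=2\sigma_\rho^4/n\to0$ and Chebyshev's inequality (or directly the strong law) yields $\widehat{\mathcal{O}}_\rho^2\to\sigma_\rho^2$ in probability. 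For asymptotic normality I would apply the Lindeberg--L\'evy central limit theorem to the i.i.d.\ summands $X_i^2$ with mean $\sigma_\rho^2$ and variance $2\sigma_\rho^4$, obtaining $\sqrt{n}\,\big(\widehat{\mathcal{O}}_\rho^2-\sigma_\rho^2\big)\to N(0,2\sigma_\rho^4)$ in distribution (equivalently, this is the CLT for the standardized $\chi^2_n$ law).

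For efficiency I would reparametrize by $\theta:=\sigma_\rho^2$ and, using \eqref{ULS-PDF-normal}, write the single-observation log-likelihood as $\ell_1(\theta)=-\tfrac12\log\theta-\tfrac{1}{2\theta}X_1^2+c$, with $c$ independent of $\theta$. Two differentiations give $\partial_\theta^2\ell_1=\tfrac{1}{2\theta^2}-\tfrac{X_1^2}{\theta^3}$, hence the Fisher information per observation is $I_1(\theta)=-\mathbb{E}\big(\partial_\theta^2\ell_1\big)=-\tfrac{1}{2\theta^2}+\tfrac{\mathbb{E}(X_1^2)}{\theta^3}=\tfrac{1}{2\theta^2}=\tfrac{1}{2\sigma_\rho^4}$. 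Therefore the asymptotic variance $2\sigma_\rho^4$ equals $I_1(\theta)^{-1}$; moreover $\mathrm{Var}(\widehat{\mathcal{O}}_\rho^2)=2\sigma_\rho^4/n$ attains the Cram\'er--Rao bound for every finite $n$, so $\widehat{\mathcal{O}}_\rho^2$ is efficient.

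The moment computations and the two limit theorems are routine. The one place deserving a little care is the efficiency claim, where one should verify the Cram\'er--Rao regularity conditions (legitimacy of differentiating under the integral sign and $0<I_1(\theta)<\infty$); these are immediate because, after the log-odds transformation, the model is the regular one-parameter Gaussian scale family $N(0,\theta)$. Together with the second-order check supplied by \eqref{2-derivative}, which already guarantees that $\widehat{\mathcal{O}}_\rho^2$ is the strict-sense ML estimator, this verification is the main --- though still standard --- obstacle.
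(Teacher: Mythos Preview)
Your argument follows essentially the same route as the paper's own proof---unbiasedness from \eqref{Unbiasedness} with $k=1$, consistency and asymptotic normality via the Strong Law of Large Numbers and the Central Limit Theorem applied to the i.i.d.\ summands $X_i^2$, and efficiency by matching the variance of $\widehat{\mathcal{O}}_\rho^2$ to the Cram\'er--Rao bound---only with more explicit bookkeeping (the $\chi^2$ representation and the verification of regularity). One point worth flagging: your computation correctly gives the asymptotic variance as $2\sigma_\rho^4$ and the per-observation Fisher information as $1/(2\sigma_\rho^4)$, whereas the paper records $2\sigma_\rho^2$ in item~3 and $I(\sigma_\rho)=1/(2\sigma_\rho^2)$ in its proof; those appear to be typographical slips in the paper, and your values are the ones that should stand.
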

\begin{proof}
By taking $n=1$ in \eqref{Unbiasedness}, the unbiasedness of $\widehat{\mathcal{O}}_\rho^2$ follows.
	
By the Strong Law of Large Numbers and
the Central Limit Theorem, it follows that $\widehat{\mathcal{O}}_\rho^2$ is consistent and asymptotically normal.

Finally, since the variance of $\widehat{\mathcal{O}}_\rho^2$ coincides with the Cramér-Rao lower bound $[n I(\sigma_\rho)]^{-1}$, where
$
I(\sigma_\rho)=-\mathbb{E}[{\partial^2 \log f_W(w;\sigma_\rho) /\partial(\sigma_\rho^2)^2}]
=
{1/(2 \sigma_\rho^2)}
$
is the Fisher information in one observation from $f_W(w;\sigma_\rho)$, 
the efficiency of $\widehat{\mathcal{O}}_\rho^2$ follows readily.
\end{proof}

\subsubsection{Unit-log-Student-$t$}
From Subsubsection \ref{Unit-log-Student-t distribution},  $Z_\rho\sim t_\nu$. Then,  $G(x)$ in \eqref{def-x-rho} and its derivative are given by $G(x)=-(\nu+1)x/(\nu+x^2)$ and $G'(x)=-(\nu+1)(\nu-x^2)/(\nu+x^2)^2$. 
Thus, by using \eqref{eq-sigma-rho}, the ML estimate of ${\sigma}_\rho^2$ in the loose sense satisfies
\begin{align}\label{id-ML-est}
	\frac{\nu+1}{n}
	\sum_{i=1}^{n}
	\dfrac{\displaystyle \log^2\Big({w_i\over 1-w_i}\Big)}{\nu \widehat{\sigma}_\rho^2+\log^2\Big({\displaystyle w_i\over \displaystyle 1-w_i}\Big)}
	=1.
\end{align}

\begin{proposition}
Any ML estimate of ${\sigma}_\rho^2$ in the loose sense
is an ML estimate in the strict sense. 
\end{proposition}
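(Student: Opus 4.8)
The plan is to exhibit the log-likelihood, viewed as a function of $\theta:=\sigma_\rho^2\in(0,\infty)$, in a form whose derivative equals a strictly positive factor times $\psi(\theta)-n$ for an explicit strictly \emph{decreasing} function $\psi$. This single-crossing structure forces every stationary point to be the unique global maximizer, which is precisely the claim. (If the likelihood equation \eqref{id-ML-est} has no nontrivial root, there is no loose-sense estimate and the statement is vacuous, so I may assume a root exists.)

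First I would substitute the $t_\nu$ density $f_{Z_\rho}(x)=c_\nu\,(1+x^2/\nu)^{-(\nu+1)/2}$ ($c_\nu$ a normalizing constant) together with $A(w_i)=\sigma_\rho^{-1}\log(w_i/(1-w_i))$ into the general log-likelihood of Subsection~\ref{ml_simulation_res}. Setting $a_i:=\nu^{-1}\log^2(w_i/(1-w_i))\ge 0$ and $\theta=\sigma_\rho^2$, the log-likelihood collapses, up to an additive constant independent of $\theta$, to
\[
\ell=-\tfrac{n}{2}\log\theta-\tfrac{\nu+1}{2}\sum_{i=1}^{n}\log\!\Big(1+\tfrac{a_i}{\theta}\Big).
\]
A direct differentiation then gives $\ell'(\theta)=\dfrac{1}{2\theta}\big[\psi(\theta)-n\big]$ with $\psi(\theta):=(\nu+1)\sum_{i=1}^{n}\dfrac{a_i}{\theta+a_i}$; setting this to zero reproduces \eqref{id-ML-est}. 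Since $\partial\ell/\partial\sigma_\rho=2\sigma_\rho\,\ell'(\theta)$ and $\sigma_\rho>0$, a nontrivial root $\widehat\sigma_\rho^2=\widehat\theta$ of \eqref{id-ML-est} is exactly a zero of $\theta\mapsto\psi(\theta)-n$.

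The key point to establish is that $\psi$ is strictly decreasing on $(0,\infty)$ whenever at least one $w_i\neq 1/2$: each summand has derivative $-a_i/(\theta+a_i)^2\le 0$, strict when $a_i>0$, with $\psi(0^+)=(\nu+1)\,\#\{i:w_i\neq 1/2\}$ and $\psi(\infty)=0$. Hence $\ell'(\theta)>0$ for $\theta<\widehat\theta$ and $\ell'(\theta)<0$ for $\theta>\widehat\theta$, so $\ell$ is strictly increasing on $(0,\widehat\theta)$ and strictly decreasing on $(\widehat\theta,\infty)$; therefore $\widehat\theta$ is the unique global maximizer of $\ell$ over $(0,\infty)$, i.e., every loose-sense ML estimate is a strict-sense ML estimate (and it is unique; equivalently, $\ell$ is strictly concave as a function of $\log\sigma_\rho^2$). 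The computation itself is routine; the only genuine content is the monotonicity of $\psi$ — note that $\ell$ is \emph{not} concave in $\theta=\sigma_\rho^2$ itself, so it is this observation, rather than a naive second-derivative/concavity check, that upgrades ``stationary point'' to ``global maximum''.
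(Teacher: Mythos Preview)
Your proposal is correct and takes a genuinely different route from the paper. The paper verifies the second-order condition $\partial^2\ell/\partial\sigma_\rho^2\big|_{\sigma_\rho=\widehat\sigma_\rho}<0$ directly: it plugs the Student-$t$ expressions for $G$ and $G'$ into the general formula \eqref{2-derivative}, reduces the required inequality via the partial-fraction identity $(x-y)/(x+y)^2=2x/(x+y)^2-1/(x+y)$, and observes that the resulting inequality is immediate once the first-order condition \eqref{id-ML-est} is invoked. Your argument instead reparametrizes to $\theta=\sigma_\rho^2$, writes $\ell'(\theta)=\tfrac{1}{2\theta}[\psi(\theta)-n]$ with $\psi$ strictly decreasing, and uses this single-crossing structure to conclude that any stationary point is the unique global maximizer. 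Your route is a bit more informative: it yields uniqueness and genuine global optimality in one stroke (the paper's second-order check, read literally, certifies only a local maximum), and it bypasses the partial-fraction manipulation. The paper's computation, on the other hand, stays in the original parameter $\sigma_\rho$ and slots directly into the general second-derivative template \eqref{2-derivative} set up earlier in the subsection.
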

\begin{proof}
Suppose $\widehat{\sigma}_\rho^2$ is an ML estimate of ${\sigma}_\rho^2$ in the loose sense. 
From \eqref{2-derivative}, it follows that $\widehat{\sigma}_\rho^2$ is an ML estimate in the strict sense whenever 
\begin{align*}
&{\nu+1\over n}
\sum_{i=1}^{n}
\log^2\Big({ w_i\over 1-w_i}\Big)\,
\dfrac{
\nu\widehat{\sigma}_\rho^2-\log^2\Big({\displaystyle w_i\over\displaystyle 1-w_i}\Big)
}{ 
\left[\nu\widehat{\sigma}_\rho^2+\log^2\Big({\displaystyle w_i\over\displaystyle 1-w_i}\Big)\right]^2
}
>-1.
\end{align*}
By using the partial fraction expansion
\begin{align*}
	{x-y\over (x+y)^2}={2x\over (x+y)^2}-{1\over x+y},
\end{align*}
the above inequality is written as
\begin{align*}
{	2\nu(\nu+1)\widehat{\sigma}_\rho^2\over n}
\sum_{i=1}^{n}
\dfrac{
\log^2\Big({\displaystyle w_i\over\displaystyle 1-w_i}\Big)
}{ 
	\left[\nu\widehat{\sigma}_\rho^2+\log^2\Big({\displaystyle w_i\over\displaystyle 1-w_i}\Big)\right]^2
}
-
{\nu+1\over n}
\sum_{i=1}^{n}
\dfrac{
\log^2\Big({\displaystyle w_i\over\displaystyle 1-w_i}\Big)
}{ 
	\nu\widehat{\sigma}_\rho^2+\log^2\Big({\displaystyle w_i\over\displaystyle 1-w_i}\Big)
}
>-1.
\end{align*}
As $\widehat{\sigma}_\rho^2$ is an ML estimate in the loose sense, it satisfies  \eqref{id-ML-est}, and consequently, the above inequality is true. This completes the proof of the result.
\end{proof}

\subsubsection{Unit-log-Laplace}

In this case, note that $f_W(w;\sigma_\rho)=\exp[\theta T(w)-\psi(\theta)] h(w)$, $W\sim {\rm ULS}(\sigma_\rho,g_c)$, with  $\theta=1/\sigma_\rho$,
\begin{align*}
T(w)=-{\sqrt{2}} \biggl\vert\log\Big({w\over 1-w}\Big)\biggr\vert,
\quad 
\psi(\theta)=-\log(\theta)
\quad \text{and} \quad 
h(w)={1\over \sqrt{2}w(1-w)}. 
\end{align*} 
Then, $f_W$ belongs to the one-parameter exponential family with $\Theta=\{\theta:\theta>0\}$ parameter space.

\begin{proposition}\label{est-suff}
\begin{enumerate}
\item For $\theta\in\Theta^{\rm o}$,  the interior of $\Theta$, all moments of $T(W)$ exist, and $\psi(\theta)$ is infinitely differentiable at any such $\theta$. Furthermore,
\begin{align*}
\mathbb{E}\left(\sqrt{2}\biggl\vert\log\Big({W\over 1-W}\Big)\biggr\vert\right)=\sigma_{\rho}
\quad \text{and} \quad	
{\rm Var}\left(\sqrt{2}\biggl\vert\log\Big({W\over 1-W}\Big)\biggr\vert\right)
=
\sigma_{\rho}^2;
\end{align*}

\item Given an i.i.d. sample of size $n$ from $f_W(w;\sigma_\rho)$,
\begin{align*}
	-{\sqrt{2}} \sum_{i=1}^{n} \biggl\vert\log\Big({W\over 1-W}\Big)\biggr\vert
\end{align*} 
is minimally sufficient;

\item The Fisher information function exists, is finite at all $\theta\in\Theta^{\rm o}$ and equals
$I (\theta)=[I(\sigma_\rho)]^{-1} = \sigma_{\rho}^{2}$.
\end{enumerate}
\end{proposition}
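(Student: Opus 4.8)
To prove Proposition~\ref{est-suff}, the plan is to lean entirely on the observation, recorded just before the statement, that $f_W(\cdot;\sigma_\rho)$ is a one-parameter \emph{natural} exponential family in $\theta=1/\sigma_\rho$, with natural statistic $T(w)=-\sqrt2\,\bigl|\log(w/(1-w))\bigr|$, cumulant function $\psi(\theta)=-\log\theta$, carrier $h$, and open natural parameter space $\Theta=\Theta^{\rm o}=(0,\infty)$. First I would check the (trivial) regularity conditions that make the general theory applicable: the support $(0,1)$ does not depend on $\theta$, the statistic $T$ is non-degenerate (it attains every value in $(-\infty,0]$), and the single point $w=1/2$ at which $f_W$ fails to be differentiable in $w$ is irrelevant here because the exponential-family identity $f_W(w;\sigma_\rho)=\exp[\theta T(w)-\psi(\theta)]h(w)$ holds for \emph{every} $w\in(0,1)$. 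Then the standard analytic theory of exponential families (e.g. Brown, \emph{Fundamentals of Statistical Exponential Families}, or Lehmann--Casella) yields at once the existence assertions of Item~1: for $\theta\in\Theta^{\rm o}$ the moment generating function of $T(W)$ is finite in a neighbourhood of $0$, so all moments of $T(W)$ exist; $\psi$ is real-analytic, in particular $C^\infty$, on $\Theta^{\rm o}$; and $\mathbb{E}_\theta[T(W)]=\psi'(\theta)$, ${\rm Var}_\theta[T(W)]=\psi''(\theta)$.

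For the explicit moments in Item~1 I would simply compute $\psi'(\theta)=-1/\theta$, $\psi''(\theta)=1/\theta^2$, substitute $\theta=1/\sigma_\rho$, and use $T(w)=-\sqrt2\,|\log(w/(1-w))|$ to get $\mathbb{E}\bigl(\sqrt2\,|\log(W/(1-W))|\bigr)=-\psi'(1/\sigma_\rho)=\sigma_\rho$ and ${\rm Var}\bigl(\sqrt2\,|\log(W/(1-W))|\bigr)=\psi''(1/\sigma_\rho)=\sigma_\rho^2$. As an independent check that avoids the general machinery, note that Proposition~\ref{Another stochastic representation} together with \eqref{def-A} gives $\sqrt2\,|\log(W/(1-W))|=\sqrt2\,\sigma_\rho\,|A(W)|\stackrel{d}{=}\sqrt2\,\sigma_\rho\,|Z_\rho|$ with $Z_\rho\sim{\rm Laplace}(0,1/\sqrt2)$, and $\sqrt2\,|Z_\rho|$ is a standard exponential variable; hence its $k$-th moment is $k!$, so all moments of $T(W)$ are finite (equal to $\sigma_\rho^k\,k!$), the mean is $\sigma_\rho$, and the variance is $\sigma_\rho^2$.

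Item~2 I would obtain from the factorization theorem and Lehmann's criterion for minimal sufficiency. The joint density of an i.i.d. sample factors as $\exp\!\bigl(\theta\sum_{i=1}^n T(w_i)-n\psi(\theta)\bigr)\prod_{i=1}^n h(w_i)$, so $S=\sum_{i=1}^n T(W_i)=-\sqrt2\sum_{i=1}^n|\log(W_i/(1-W_i))|$ is sufficient; and for two sample points the likelihood ratio equals $\exp\!\bigl(\theta(\sum_i T(w_i)-\sum_i T(w_i'))\bigr)$ times a factor free of $\theta$, which is constant in $\theta$ over the interval $\Theta$ exactly when $\sum_i T(w_i)=\sum_i T(w_i')$; hence $S$ is minimal sufficient.

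Item~3 follows by differentiating $\log f_W(w;\sigma_\rho)=\theta T(w)-\psi(\theta)+\log h(w)$ twice in $\theta$: $\partial_\theta\log f_W=T(w)-\psi'(\theta)$ and $\partial_\theta^2\log f_W=-\psi''(\theta)$, so $I(\theta)=-\mathbb{E}_\theta[\partial_\theta^2\log f_W]=\psi''(\theta)=1/\theta^2=\sigma_\rho^2$, finite on $\Theta^{\rm o}$ and consistent with $I(\theta)={\rm Var}_\theta[T(W)]$ from Item~1. Finally, the change-of-variables rule for Fisher information under the smooth bijection $\theta=1/\sigma_\rho$ gives $I(\sigma_\rho)=I(\theta)\,(\mathrm{d}\theta/\mathrm{d}\sigma_\rho)^2=\sigma_\rho^2\cdot\sigma_\rho^{-4}=\sigma_\rho^{-2}$, i.e. $I(\theta)=[I(\sigma_\rho)]^{-1}$. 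I do not expect a genuine obstacle here: the whole argument is a specialization of textbook exponential-family facts, and the only point needing any care is verifying the regularity hypotheses (open natural parameter space, $\theta$-free support, non-degenerate $T$, and the harmless kink of $f_W$ at $w=1/2$) that license those facts.
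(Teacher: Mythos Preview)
Your proposal is correct and follows essentially the same approach as the paper, which simply invokes the standard exponential-family result (Proposition~16.1 of \cite{dasgupta2008asymptotic}) and omits all details. Your write-up is in fact more thorough: you spell out the cumulant computations, the Lehmann minimal-sufficiency argument, the change-of-variables for Fisher information, and you add the independent check via $Z_\rho\sim{\rm Laplace}(0,1/\sqrt2)$, none of which appear explicitly in the paper.
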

\begin{proof}
The proof follows by direct application of Proposition 16.1 of  \cite{dasgupta2008asymptotic}, and is therefore omitted for the sake of conciseness.
\end{proof}

\begin{proposition}\label{pro-uniq}
Let $\theta = \theta_0 \in \Theta^{\rm o}$ be the true value. Then, for all large $n$, with probability $1$, a unique ML estimator of
$\sigma_{\rho}$ exists, and is consistent and asymptotically normal.
\end{proposition}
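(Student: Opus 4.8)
The plan is to leverage the one-parameter exponential family representation established immediately before the statement, namely $f_W(w;\sigma_\rho)=\exp[\theta T(w)-\psi(\theta)]h(w)$ with natural parameter $\theta=1/\sigma_\rho\in\Theta^{\rm o}=(0,\infty)$, $T(w)=-\sqrt2\,\bigl\vert\log(w/(1-w))\bigr\vert$ and $\psi(\theta)=-\log\theta$, together with the moment identities in Proposition \ref{est-suff}. Writing $\bar T_n=\tfrac1n\sum_{i=1}^n T(W_i)$, the log-likelihood in $\theta$ is $\ell_n(\theta)=n[\theta\bar T_n-\psi(\theta)]+\text{const}$, so $\ell_n''(\theta)=-n\psi''(\theta)=-n/\theta^2<0$ throughout $\Theta^{\rm o}$. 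Hence $\ell_n$ is strictly concave, which already forces uniqueness: there is at most one stationary point, and any stationary point is automatically the unique global maximiser.

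For existence I would solve the likelihood equation $\ell_n'(\theta)=0$, i.e.\ $\bar T_n=\psi'(\theta)=-1/\theta$, whose only root is $\widehat\theta_n=-1/\bar T_n$; equivalently $\widehat\sigma_\rho=1/\widehat\theta_n=-\bar T_n=\sqrt2\cdot\tfrac1n\sum_{i=1}^n\bigl\vert\log(W_i/(1-W_i))\bigr\vert$. By the Strong Law of Large Numbers and Proposition \ref{est-suff}(1), $\bar T_n\to\mathbb{E}_{\theta_0}[T(W)]=-\sigma_\rho<0$ almost surely, so with probability one, for all large $n$ we have $\bar T_n<0$ and therefore $\widehat\theta_n\in\Theta^{\rm o}$; combined with the previous paragraph this gives existence and uniqueness of the ML estimator for all large $n$ with probability $1$. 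The same limit shows $\widehat\sigma_\rho\to\sigma_\rho$ almost surely, which is (strong) consistency.

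Asymptotic normality then follows by applying the Central Limit Theorem to the i.i.d.\ summands $T(W_i)$, whose variance ${\rm Var}(T(W))=\sigma_\rho^2$ is finite by Proposition \ref{est-suff}(1): $\sqrt n(\bar T_n-\mathbb{E}[T(W)])$ converges in distribution to $N(0,\sigma_\rho^2)$, and since $\widehat\sigma_\rho-\sigma_\rho=-(\bar T_n-\mathbb{E}[T(W)])$ exactly, $\sqrt n(\widehat\sigma_\rho-\sigma_\rho)$ converges in distribution to $N(0,\sigma_\rho^2)$; by Proposition \ref{est-suff}(3) this variance equals $1/I(\sigma_\rho)$, so $\widehat\sigma_\rho$ is moreover asymptotically efficient. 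Alternatively, all three conclusions can be quoted directly from the standard asymptotic theory of full-rank one-parameter exponential families, as in the companion result to Proposition 16.1 of \cite{dasgupta2008asymptotic}. The only delicate point is the existence bookkeeping in the second paragraph --- checking that the stationary point of the globally concave log-likelihood lands in the \emph{open} parameter space rather than on its boundary --- and this is precisely where the almost-sure sign of the limit of $\bar T_n$ enters; the rest is a routine appeal to the SLLN and the CLT.
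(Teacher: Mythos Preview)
Your argument is correct and complete. The paper takes a slightly different, more black-box route: it simply notes that $\psi''(\theta)=1/\theta^2=\sigma_\rho^2>0$, invokes Theorem~16.1 of \cite{dasgupta2008asymptotic} to obtain existence, uniqueness, consistency and asymptotic normality of the ML estimator of the natural parameter $\theta=1/\sigma_\rho$, and then transfers these conclusions to $\sigma_\rho$ via the invariance of ML estimators, the continuous mapping theorem, and the delta method.

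What you do differently is to unpack that exponential-family theorem explicitly for this instance: you write down the concave log-likelihood, solve the likelihood equation in closed form, and observe the key simplification that $\widehat\sigma_\rho=-\bar T_n$ is \emph{exactly} a sample mean. This buys you two things. First, existence and consistency come straight from the SLLN applied to $\bar T_n$, with the ``delicate point'' you flag (that $\bar T_n<0$ eventually a.s.) handled transparently. Second, asymptotic normality follows from the CLT without any delta-method step, since the map from $\bar T_n$ to $\widehat\sigma_\rho$ is already affine; the paper, working first with $\widehat\theta_n=-1/\bar T_n$, needs the delta method to pass from $\theta$ to $\sigma_\rho=1/\theta$. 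Your approach is therefore a shade more elementary and self-contained, while the paper's is terser but relies on quoting the general exponential-family machinery.
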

\begin{proof}
As $\psi''(\theta)=\sigma_{\rho}^2>0$,
by Theorem 16.1 of \cite{dasgupta2008asymptotic}, a unique ML estimator of
$\theta=1/\sigma_{\rho}$ exists, and is consistent and asymptotically normal. Hence, by the invariance property of the ML estimator, by the continuous application theorem for convergence in probability and by delta method, the proof of the result gets completed.
\end{proof}

Furthermore, in the unit-log-Laplace case, the likelihood function for $\sigma_\rho$ is given by
\begin{align*}
L(\sigma_\rho)\stackrel{\eqref{ULS-PDF-Laplace}}{=}
\prod_{i=1}^{n}
{1\over w_i(1-w_i)\sigma_\rho}\, 
f_\ell(A(w_i)),
\end{align*}
where $A$ is as given in \eqref{def-A} and $f_\ell$ is the PDF of Laplace distribution with location parameter $0$ and scale parameter $1/\sqrt{2}$. Because $f_\ell(x)$ is not differentiable at $x=0$ we cannot use the log-likelihood function differentiation method as in the last two cases. In what follows, we maximize $L(\sigma_\rho)$.

Indeed, an algebraic manipulation yields
\begin{align*}
L(\sigma_\rho)
=
{1\over 2^{n/2}} 
\left[\prod_{i=1}^{n} {1\over w_i(1-w_i)}\right]
\dfrac{\displaystyle\exp\left(-{\sqrt{2}\over\sigma_\rho}\, 	\sum_{i=1}^{n}
	\left\vert\log\Big({w_i\over 1-w_i}\Big)\right\vert \right)}{\sigma_\rho^n}.
\end{align*}
As the function $\sigma_\rho\longmapsto\exp(-t/\sigma_\rho)/\sigma_\rho^n$, $t>0$, reaches a maximum at $\sigma_\rho=t/n$, the right-hand expression in the above equation is at most
\begin{align*}
{\exp(-n)\over 2^{n}} 
\left[\prod_{i=1}^{n} {1\over w_i(1-w_i)}\right] 
\Bigg/
\left[
{1\over n} \sum_{i=1}^{n} \biggl\vert\log\Big({w_i\over 1-w_i}\Big)\biggr\vert
\right]^{n}.
\end{align*}
So, the ML estimate of ${\sigma}_\rho^2$ is  
\begin{align*}
\widehat{\sigma}_\rho^2
=
{\sqrt{2}\over n} \sum_{i=1}^{n} \biggl\vert\log\Big({w_i\over 1-w_i}\Big)\biggr\vert.
\end{align*}
This corroborates the uniqueness of the  ML estimator evidenced in Proposition \ref{pro-uniq}. Moreover, from Item 1 of Proposition \ref{est-suff}, the  ML estimator of ${\sigma}_\rho^2$ is unbiased and consistent.


\section{Simulation results} \label{Sec:5}
\noindent

In this section, we present the results of Monte Carlo simulation studies for unit-log-normal model (the results of the unit-log-Student-$t$ and unit-log-Laplace are somewhat similar and so are omitted here due to space limitations), considering different scenarios of parameters and sample sizes. The first part of the study presents the evaluation of the ML estimates, while the second part evaluates the empirical distribution of the generalized Cox-Snell (GCS) and randomized quantile (RQ) residuals, which are given, respectively, by
\begin{equation*}
    \hat{r}_i^{\text{GCS}} = -\log[1-\widehat{F}_W(w_i;\widehat{\sigma}_{\rho})]\quad \text{and} \quad
     \hat{r}_i^{\text{RQ}} = \Phi^{-1}(\widehat{F}_W(w_i;\widehat{\sigma}_{\rho})), \quad  i = 1,\dots,n,
\end{equation*}
where $\widehat{F}_W$ is the fitted unit-log-normal or unit-log-Student-$t$ CDF, and $\widehat{\sigma}_{\rho}$ is the ML estimate of $\sigma_{\rho}$.  When the models are correctly specified, the GCS is asymptotically standard exponentially distributed, while the RQ is asymptotically standard normally distributed. The distributions of these residuals are important as they will be used to assess the goodness-of-fit of the considered model.

Both studies consider simulated data generated from each of the unit-log-normal and unit-log-Student-$t$ models according to the following stochastic relation:
\begin{equation*}
W
=
A^{-1}(F_W^{-1}(U))
\stackrel{\eqref{inverse-A}}{=}
{1\over 1+\exp(-\sigma_\rho F_W^{-1}(U))},
\end{equation*}
where ${F}_W$ is the unit-log-normal or unit-log-Student-$t$ CDF and $U\sim U[0,1]$. The
Monte Carlo simulation experiments were performed using the \texttt{R} environment; see
http://www.r-project.org.

\subsection{Maximum likelihood estimation} \label{Sec:5.1}

 The simulation scenario used the following settings: sample size $n \in {50,100,150,250,600}$ and true parameter $\sigma_{\rho} \in \{0.15,0.25,0.50,1.0,1.5\}$, with $500$ Monte Carlo replications for each sample size. To study the performance of the ML estimators, we  computed the bias and root mean square error (RMSE), defined by
\begin{eqnarray*}
 \widehat{\textrm{Bias}}(\widehat{\theta}) = \frac{1}{M} \sum_{i = 1}^{M} \widehat{\theta}^{(i)} - \theta, \quad
\widehat{\mathrm{RMSE}}(\widehat{\theta}) = {\sqrt{\frac{1}{M} \sum_{i = 1}^{M} (\widehat{\theta}^{(i)} - \theta)^2}}, \\
\end{eqnarray*}
where $\theta$ and $\widehat{\theta}^{(i)}$ are the true parameter value and its respective $i$-th ML estimate, and $M$ is the number of Monte Carlo replications.

The ML estimation results are shown in Figure \ref{fig_dagum_MC_BRC} wherein the empirical bias and RMSE are both presented. It can be seen that the simulations produced the expected outcomes, namely, that the bias and RMSE both approach zero as $n$ increases.

\begin{figure}[!ht]
\vspace{-0.25cm}
\centering
{\includegraphics[height=5.0cm,width=5.0cm]{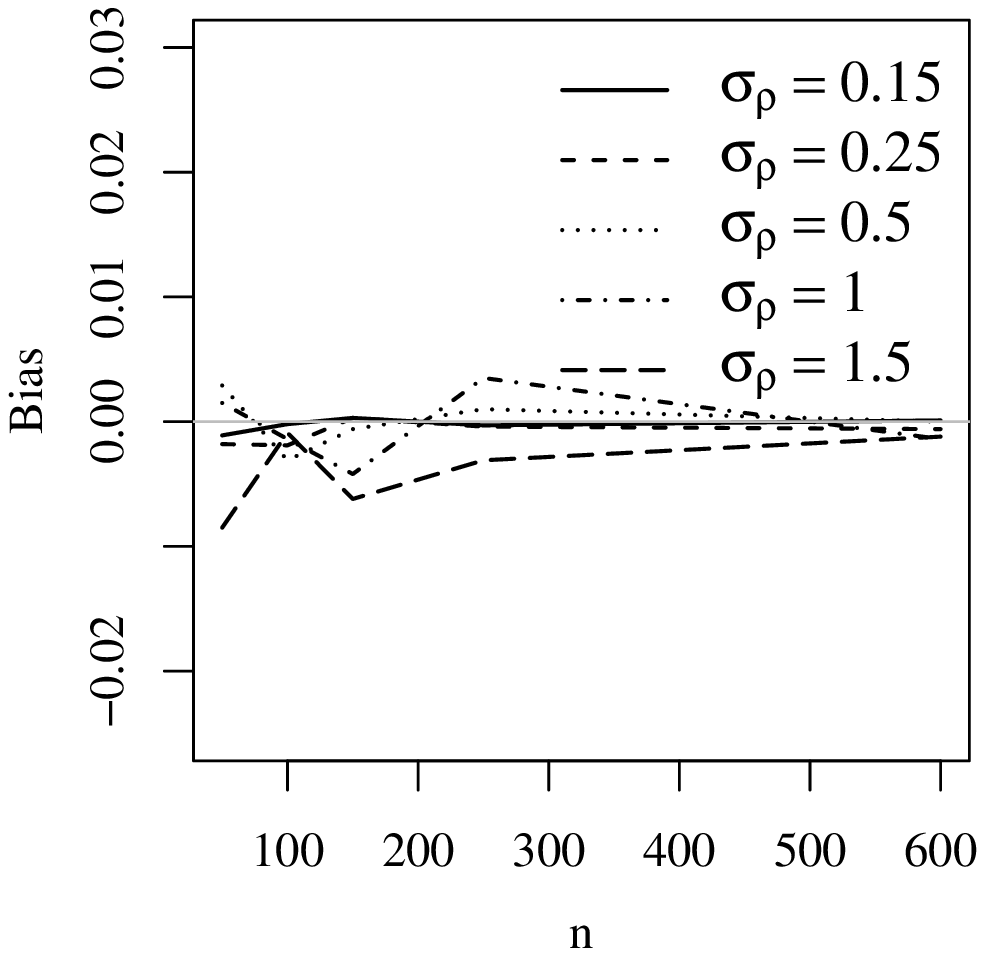}}\hspace{-0.25cm}
{\includegraphics[height=5.0cm,width=5.0cm]{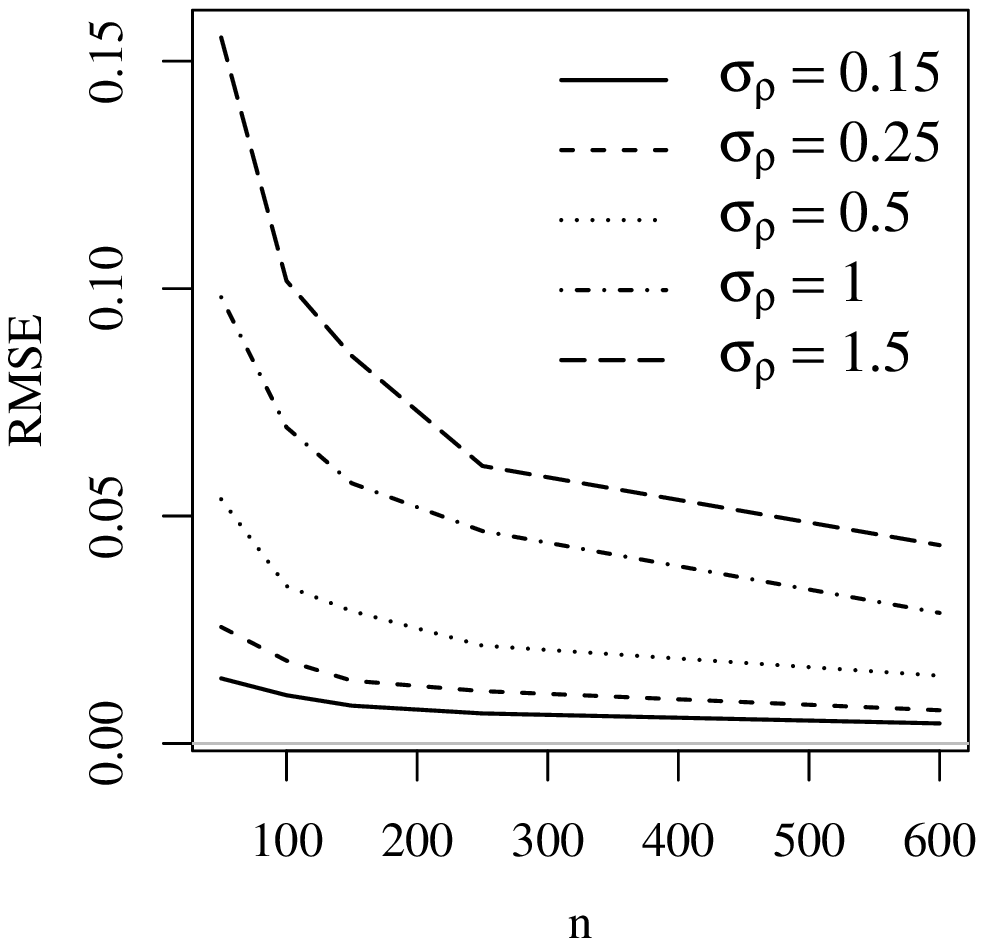}}\hspace{-0.25cm}
%
\vspace{-0.2cm}
\caption{Monte Carlo simulation results for the unit-log-normal model.}
\label{fig_dagum_MC_BRC}
\end{figure}

\subsection{Empirical distribution of residuals}\label{residuals_simulation_res}
\noindent

In this subsection, we evaluate the performance of GCS and RQ residuals. We present the empirical means of the following descriptive statistics: mean, median, standard deviation (Sd), coefficient of skewness and coefficient of kurtosis, whose values are expected to be 1, 0.69, 1, 2 and 6, respectively, for the GCS residuals, and 0, 0, 1, 0 and 0, respectively, for the RQ residuals. From Figures \ref{fig_normal_MC_GCS} and \ref{fig_normal_MC_RQ}, we note that the considered residuals conform well with their reference distributions, and we can therefore use the RQ and GCS residuals to verify well the fit of the proposed models.


%

\begin{figure}[!ht]
\vspace{-0.25cm}
\centering
{\includegraphics[height=4.8cm,width=4.8cm]{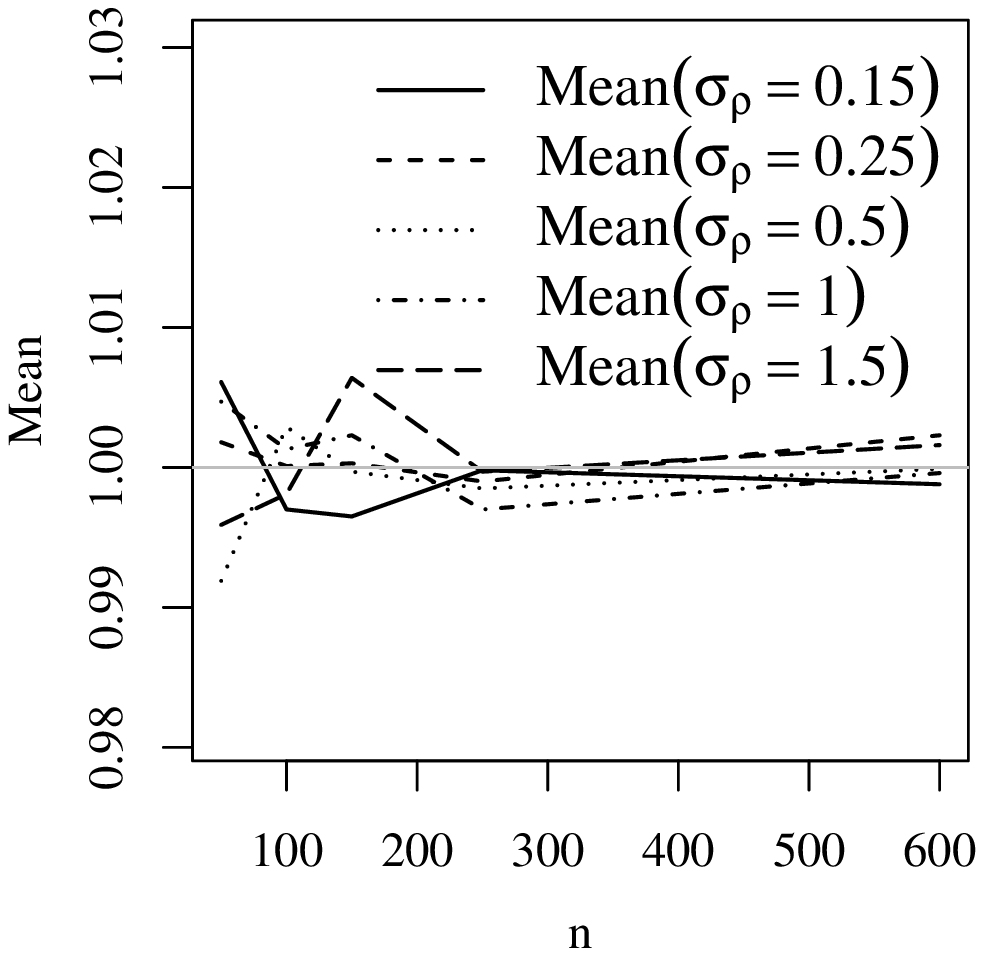}}\hspace{-0.25cm}
{\includegraphics[height=4.8cm,width=4.8cm]{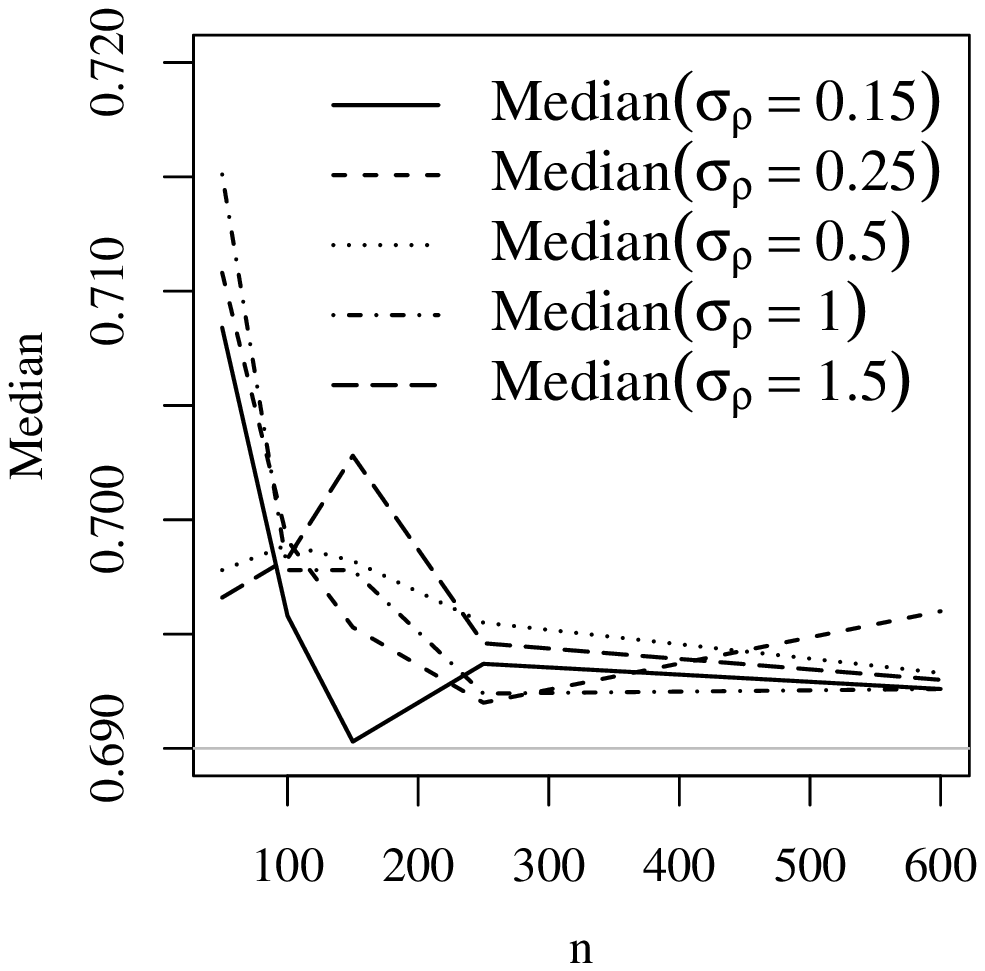}}\hspace{-0.25cm}
{\includegraphics[height=4.8cm,width=4.8cm]{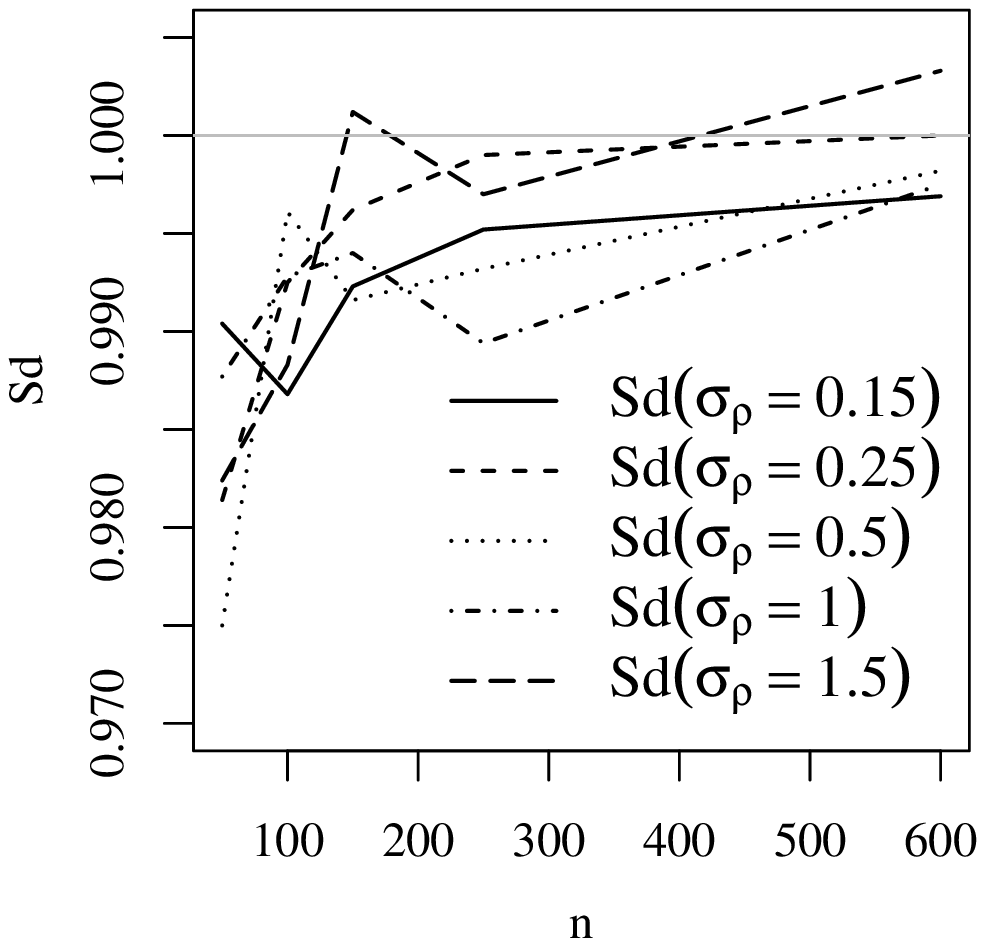}}\hspace{-0.25cm}
{\includegraphics[height=4.8cm,width=4.8cm]{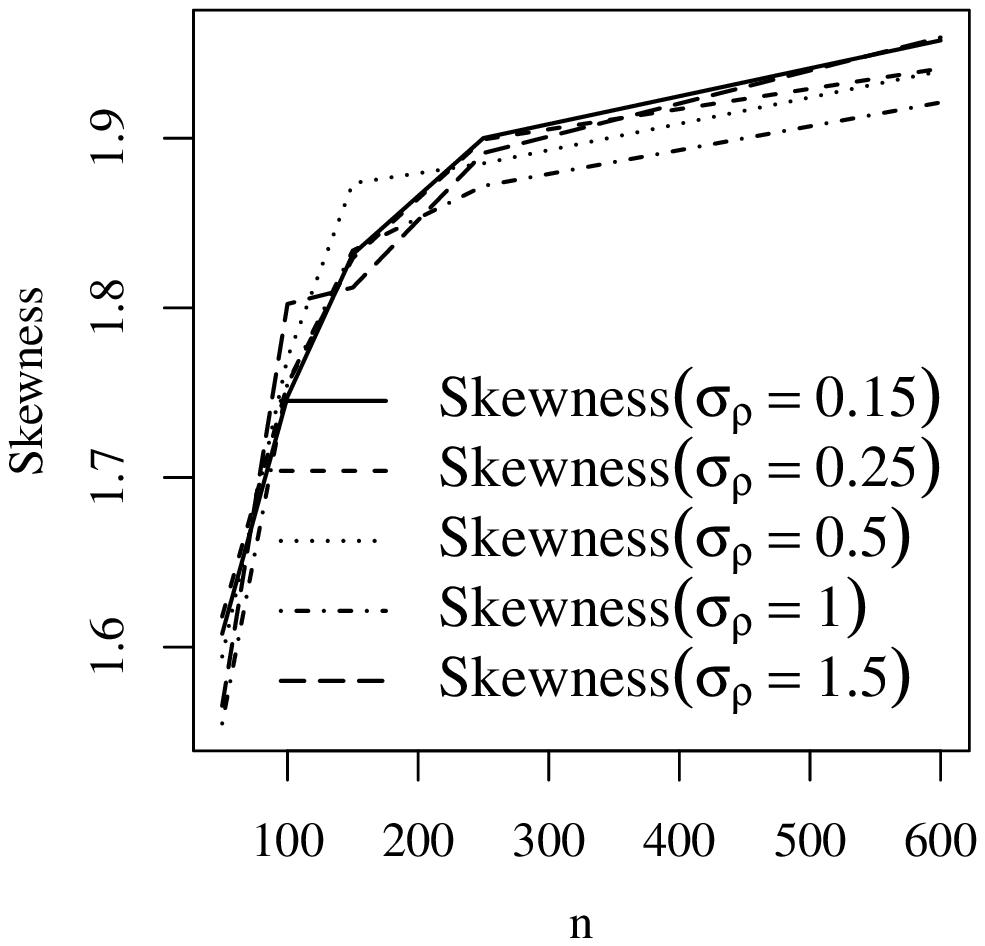}}\hspace{-0.25cm}
{\includegraphics[height=4.8cm,width=4.8cm]{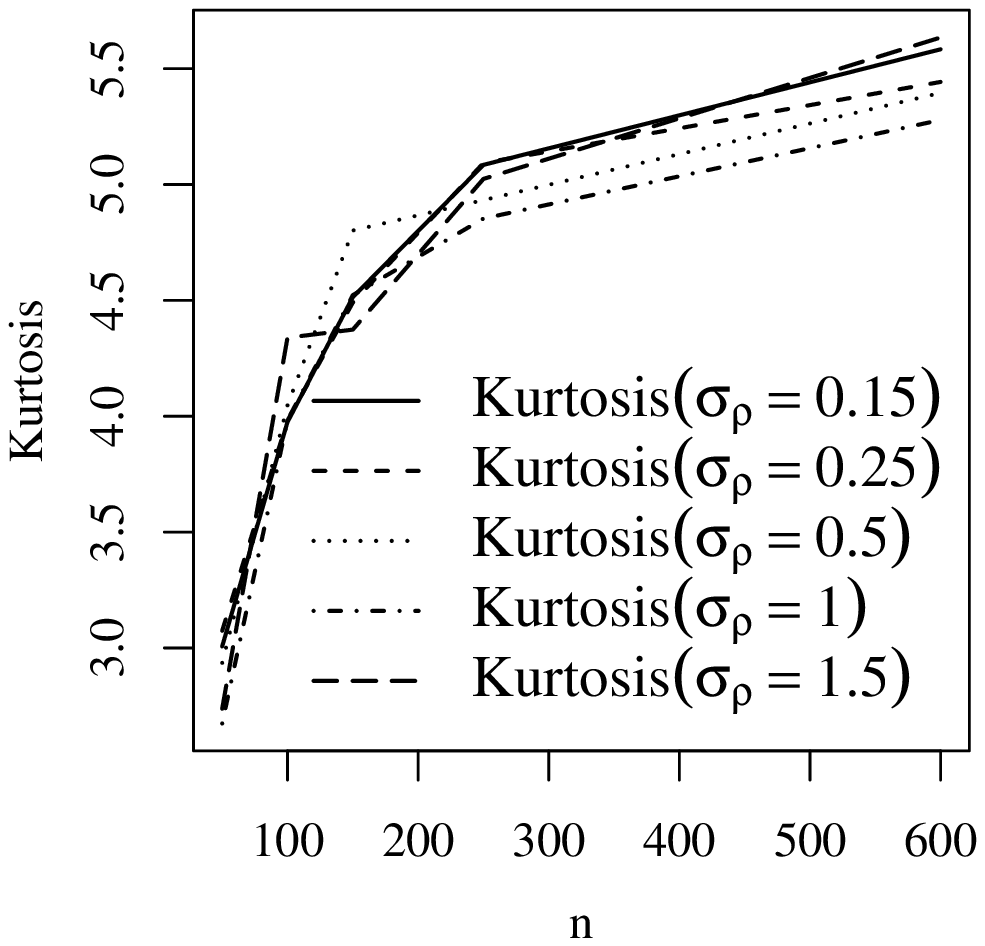}}

\vspace{-0.2cm}
\caption{Monte Carlo simulation results of the RQ residuals for the unit-log-normal model.}
\label{fig_normal_MC_GCS}
\end{figure}

\begin{figure}[!ht]
\vspace{-0.25cm}
\centering
{\includegraphics[height=4.8cm,width=4.8cm]{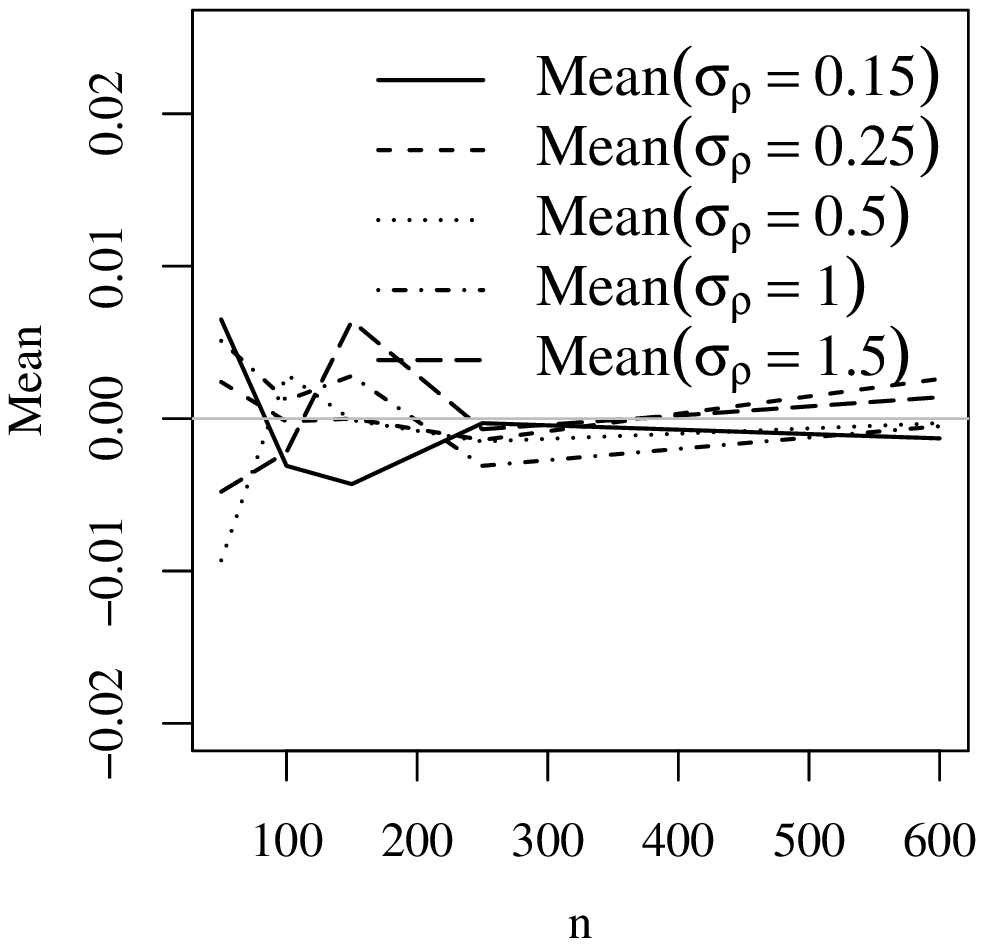}}\hspace{-0.25cm}
{\includegraphics[height=4.8cm,width=4.8cm]{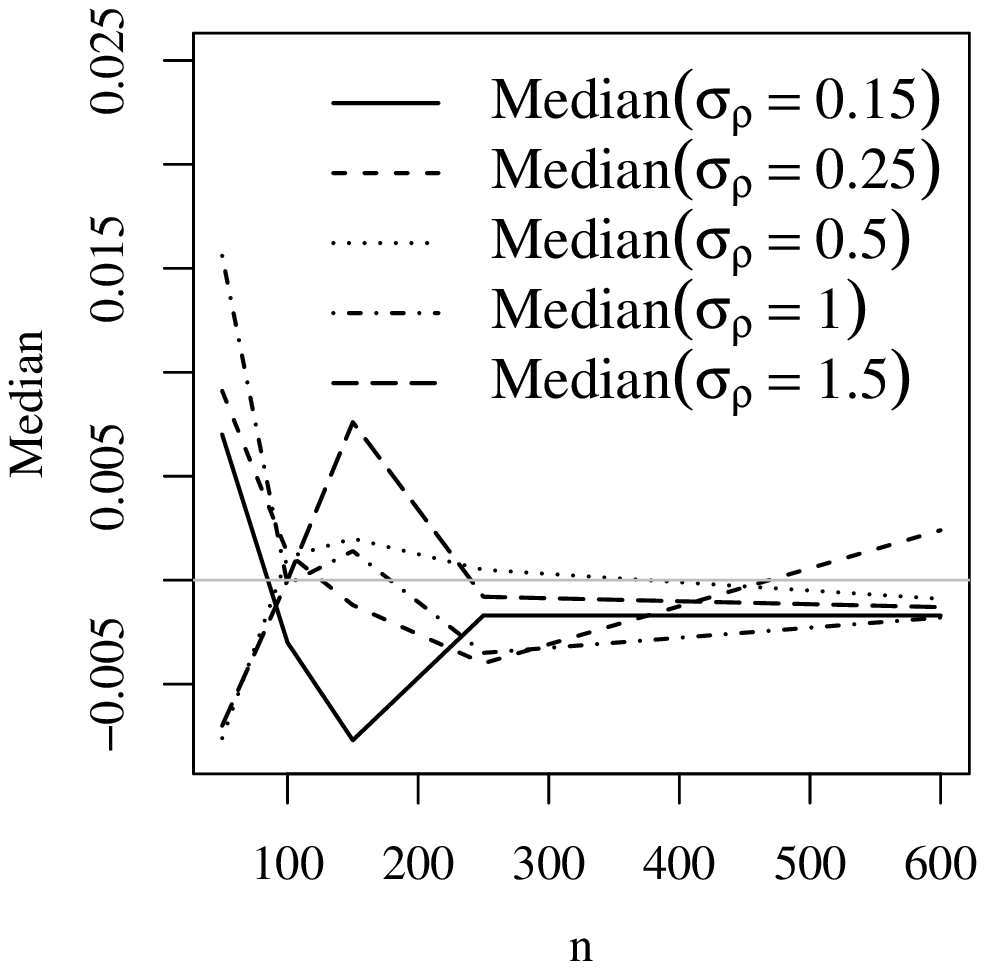}}\hspace{-0.25cm}
{\includegraphics[height=4.8cm,width=4.8cm]{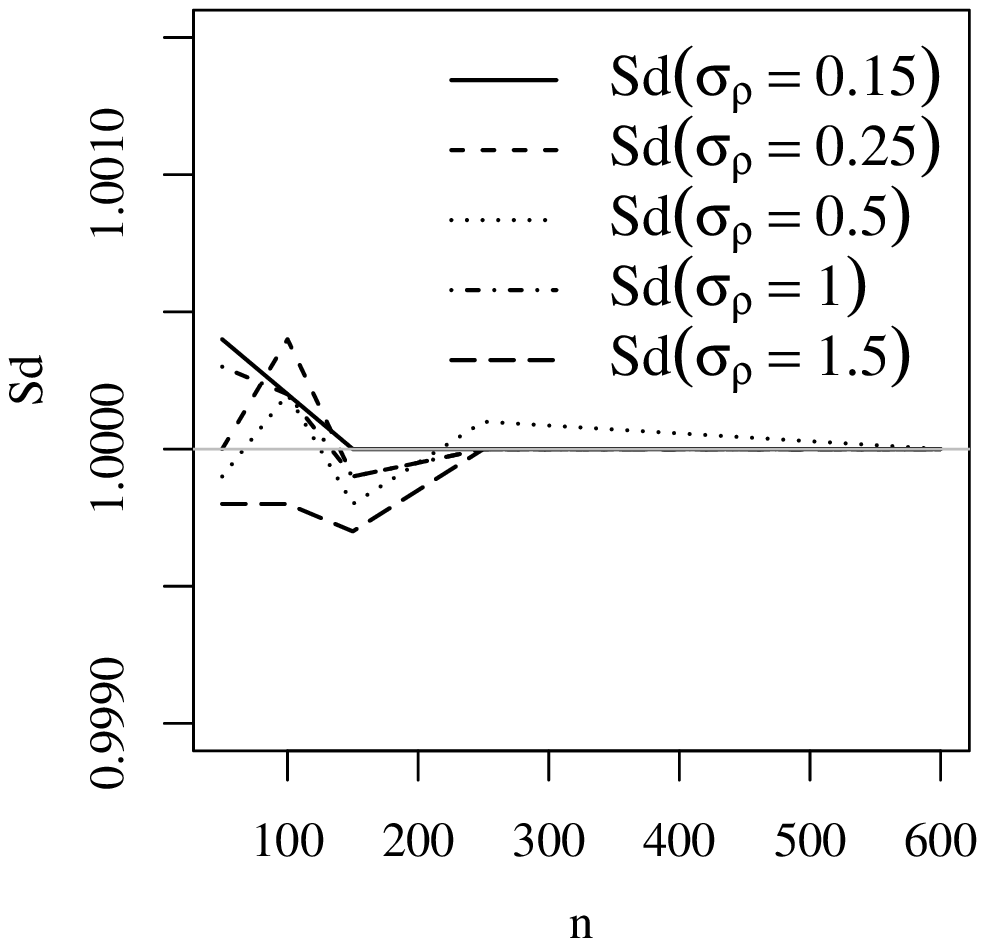}}\hspace{-0.25cm}
{\includegraphics[height=4.8cm,width=4.8cm]{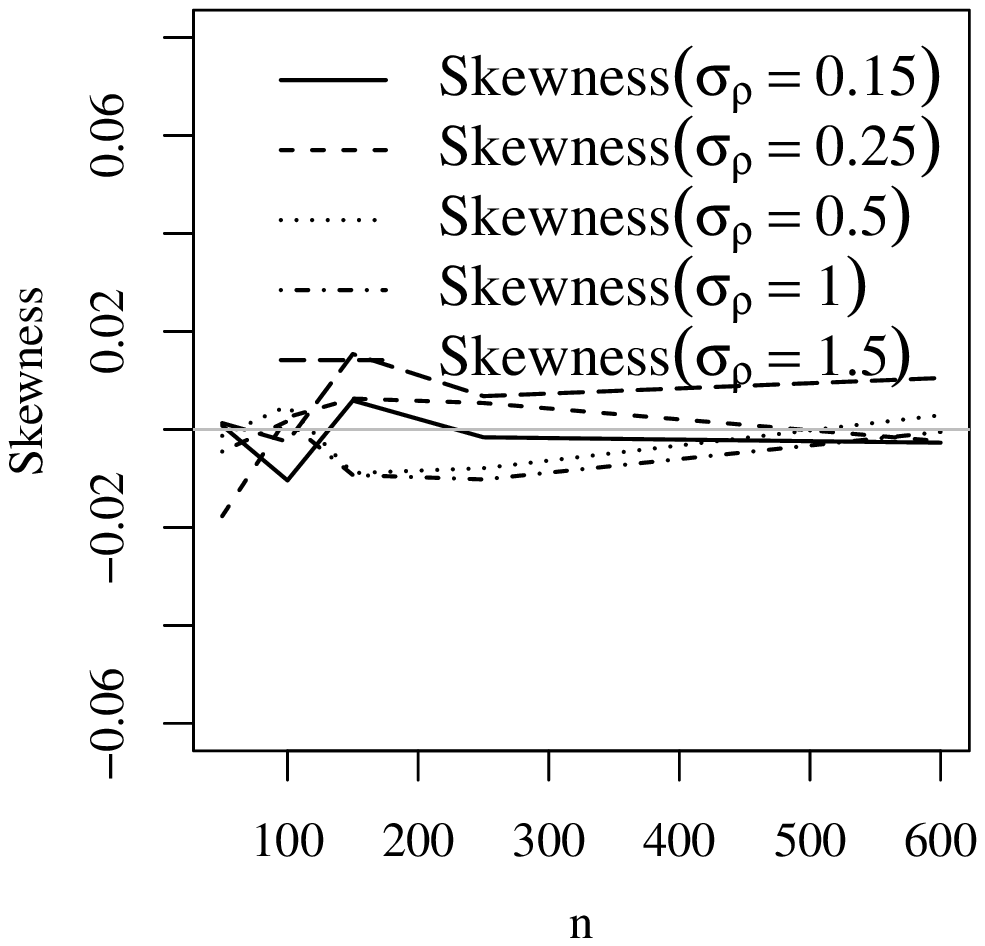}}\hspace{-0.25cm}
{\includegraphics[height=4.8cm,width=4.8cm]{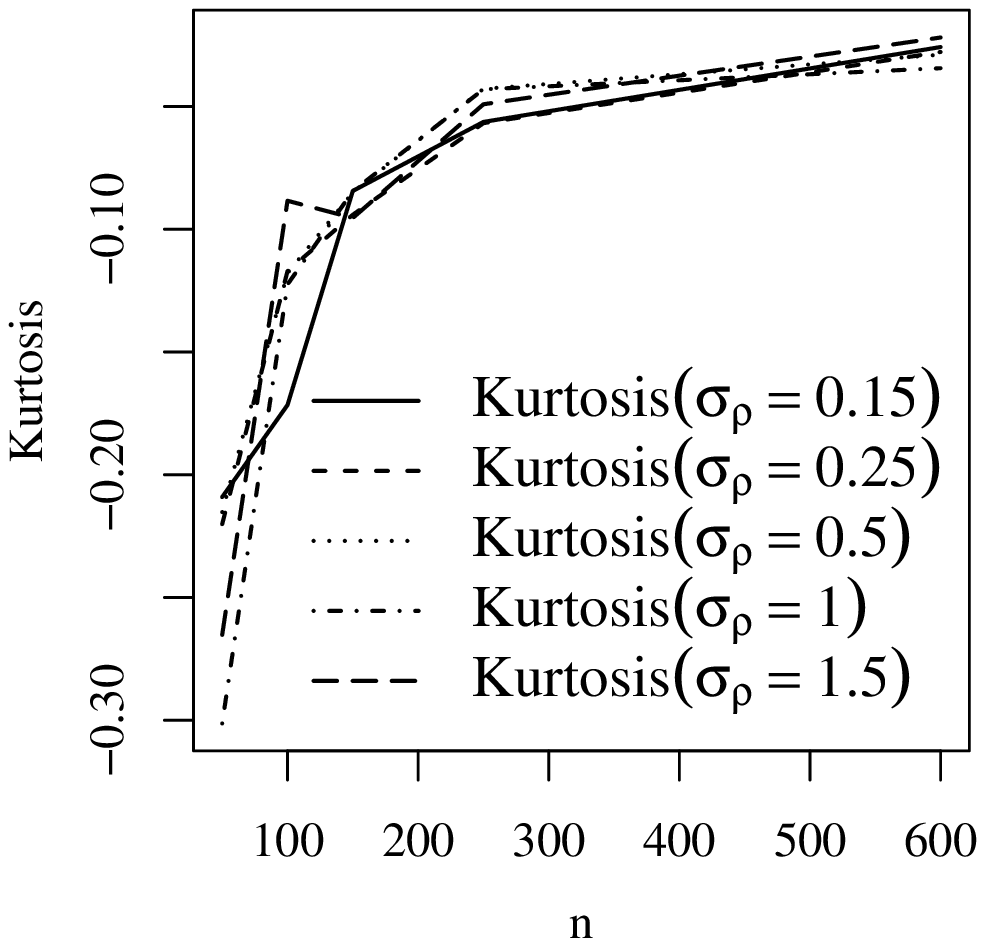}}
\vspace{-0.2cm}
\caption{Monte Carlo simulation results of the GCS residuals for the unit-log-normal model.}
\label{fig_normal_MC_RQ}
\end{figure}


\section{Illustration with internet accesss data} \label{Sec:6}
\noindent

In this section, a real-world data set, corresponding to the share of the population using internet, is analyzed. Figure \ref{fig:inter} shows the share of the population that is accessing internet for 201 countries of the world in 2015. Here, the internet can be accessed through a computer, mobile phone, games machine, digital TV, etc.; see \url{https://ourworldindata.org/internet}. From Figure \ref{fig:inter}, we observe high rates of people online in richer countries and much lower rates in the developing world. For example, three-quarters (74.55\%) of people in the US were online, while in India only 14.9\% used the internet.

\begin{figure}[!ht]
\centering
{\includegraphics[height=10cm,width=18cm]{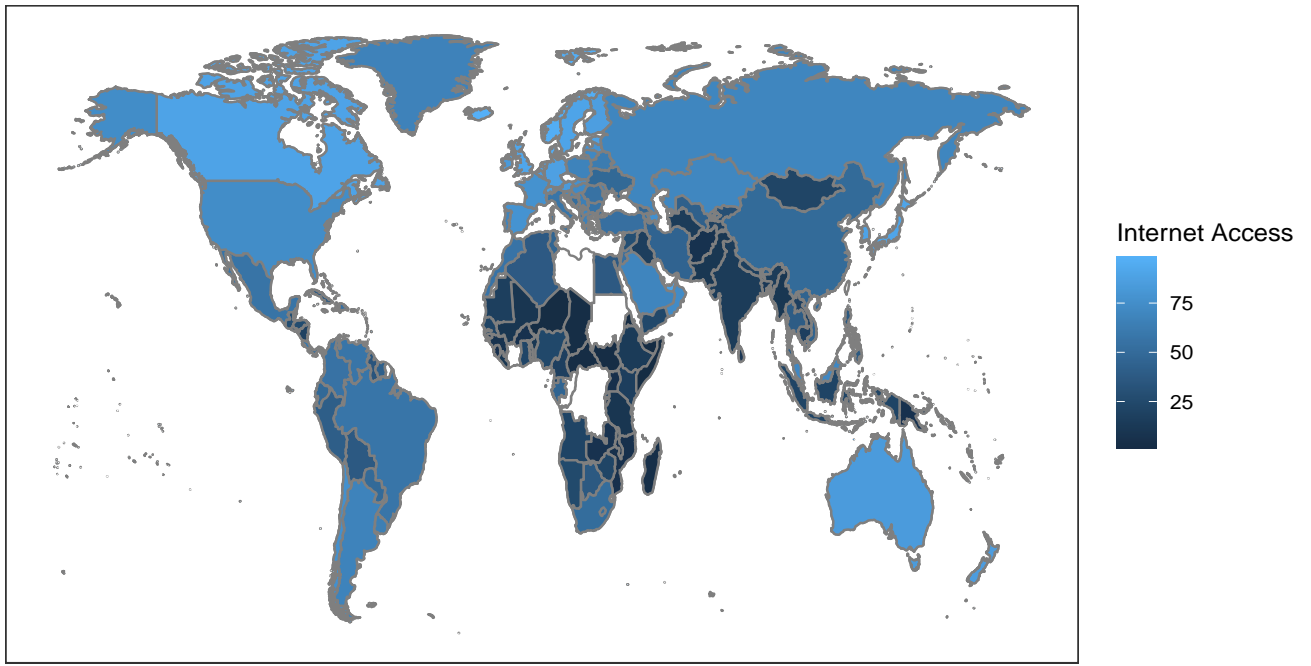}}
\vspace{-0.5cm}
 \caption{\small {Share of the population using the internet, 2015.}}
\label{fig:inter}
\end{figure}

In terms of stochastic representation, the share of the population using internet, represented by the random variable $W$, can be expressed as
\begin{equation}\label{stoch}
W=\frac{T_1}{T_1 + T_2}
\end{equation}
where $T_1 \in \mathbb{R}^+$ and $T_2 \in \mathbb{R}^+$ are two random variables following log-symmetric distributions representing the number of people with internet access and the number of people without internet access, respectively; in other words, the sum $T_1 + T_2$ represents the total population while the ratio in \eqref{stoch} has support in the unit interval $(0,1)$ with PDF in \eqref{ULS-PDF}.

Table~\ref{tab:descp_internet} provides descriptive statistics for the share of the population using the internet, including the mean, median, standard deviation (SD), coefficient of variation (CV), skewness (CS), (excess) kurtosis (CK), and minimum and maximum values. From this table, we note that the mean is almost equal to the median. In addition, the CV (dispersion around the mean) is less than 100\%, indicating a low dispersion of data around the mean. Finally, we observe that the data show skewness near zero and low degree of kurtosis.

\begin{table}[!ht]
\centering
\caption{Summary statistics for the internet data.}
\label{tab:descp_internet}
\begin{tabular}{ccccccccccccccccc}
\hline
   Mean   & Median       &  SD     &  CV       &  CS   &  CK    & minimum & maximum & size  \\
\hline
0.480    & 0.497         & 0.289  & 60.276\%  &  0.008 & -1.316 & 0.011 & 0.983  & 201 \\
\hline
\end{tabular}
\end{table}

Table \ref{tab:resulall_internet} presents the ML estimates and SEs for the unit-log-normal, unit-log-Student-$t$ and unit-log-Laplace model parameters. This table also presents the log-likelihood value, and Akaike (AIC) and Bayesian information (BIC) criteria. For comparative purpose, the results of the beta model \citep{FerrariCribari2004} are provided as well. The results in Table \ref{tab:resulall_internet} reveal that the proposed unit-log-normal and unit-log-Student-$t$ models provide better adjustments than the beta model based on the values of log-likelihood, AIC and BIC. Furthermore, the unit-log-normal model provides the best fit to these data.

\begin{table}[!ht]
\centering
\begin{small}
 \caption{{ML estimates (with SE in parentheses) and model selection measures for fit to the internet data.}}
 \begin{tabular}{l rrrrrrrrrrrrrrrrrrrrr}
\hline
          & Unit-log-normal  & Unit-log-Student-$t$  & Unit-log-Laplace     &  Beta  \\[-0.15cm]
          &          &    &           &              \\
\hline
$\sigma_{\rho}\, (\text{or}\ \mu)$ &  1.6645  &  1.5390  & 1.9794    & -0.7386     \\[-0.1cm]
                                   & (0.0830) & (0.0860) & (0.2341)  & (0.0397)    \\
$\nu \, (\text{or}\ \phi)$         &          &  10      &         & 2.1860  \\[-0.1cm]
                                   &          &          &         & (0.1849)     \\
Log-lik.                           & 3.2534   & 1.2758   & -21.9354 & 1.1450    \\
AIC                                & -4.5068  & -0.5516  & 45.8709  & 1.7098     \\
BIC                                & -1.2035  & 2.7517   & 49.1742  & 8.3164     \\
\hline
\end{tabular}
\label{tab:resulall_internet}
\end{small}
\end{table}

Figure~\ref{fig:qqplots_internet} shows the QQ plots with simulated envelope of the GCS and RQ residuals for the unit-log-normal, unit-log-Student-$t$, unit-log-Laplace and beta models considered in Table \ref{tab:resulall_internet}. We observe that the unit-log-normal model provides better fit than the unit-log-Student-$t$, unit-log-Laplace and beta models. Figure \ref{fig:pds_cdfs} displays the histogram and superimposed fitted PDFs, and fitted CDFs (empirical CDF in gray).

\begin{figure}[!ht]
\centering
\subfigure[Unit-log-normal]{\includegraphics[height=4cm,width=4cm]{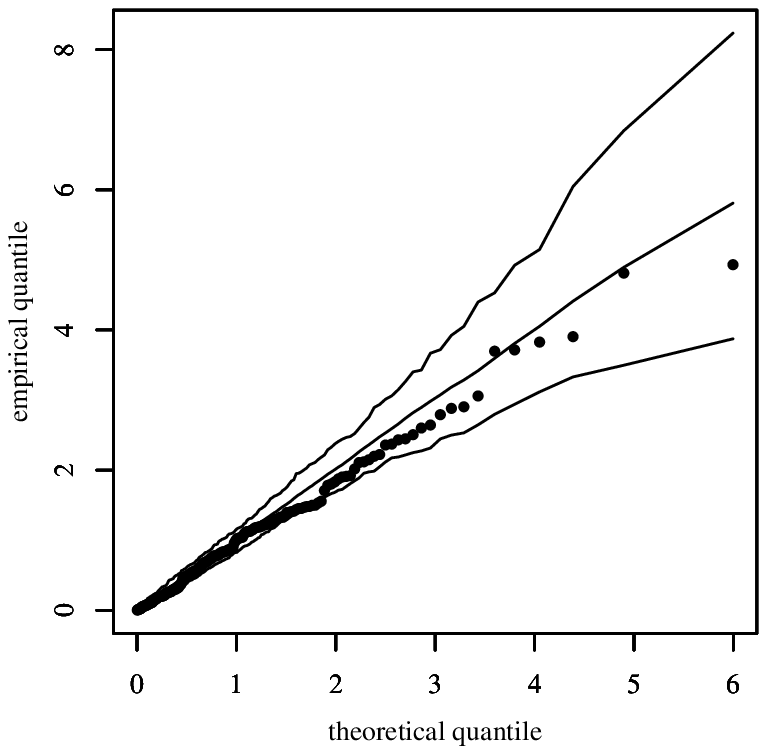}}
\subfigure[Unit-log-Student-$t$]{\includegraphics[height=4cm,width=4cm]{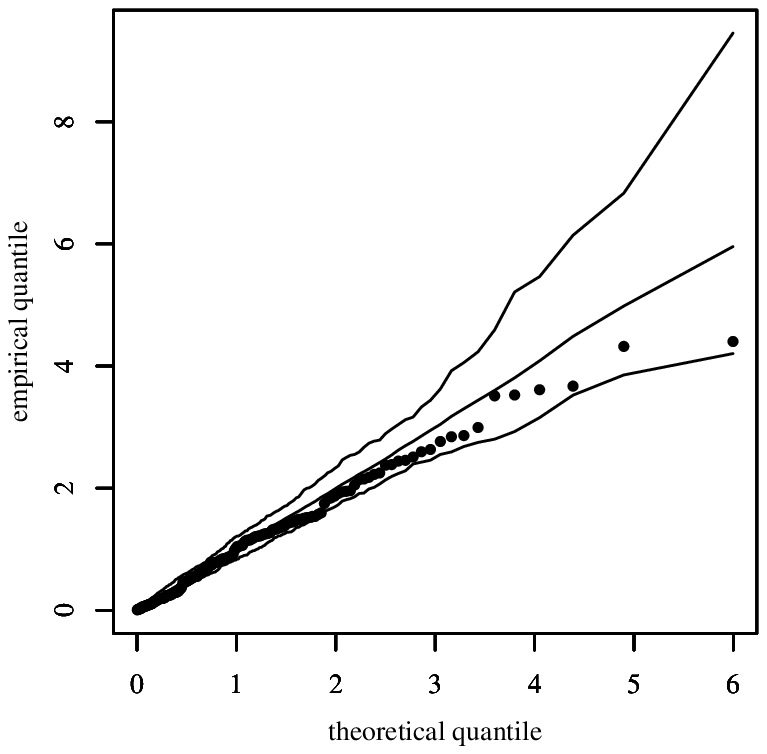}}
\subfigure[Unit-log-Laplace]{\includegraphics[height=4cm,width=4cm]{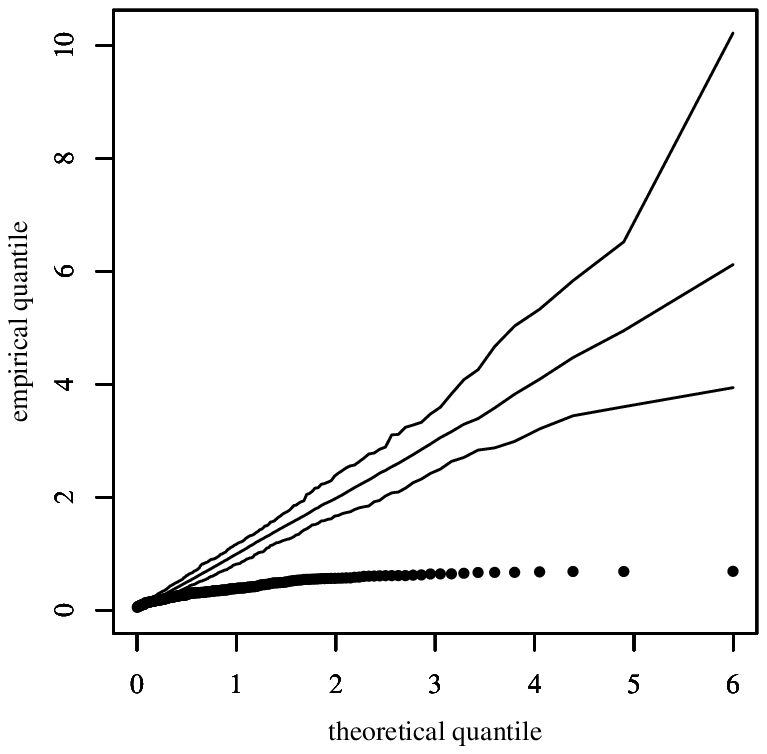}}
\subfigure[Beta]{\includegraphics[height=4cm,width=4cm]{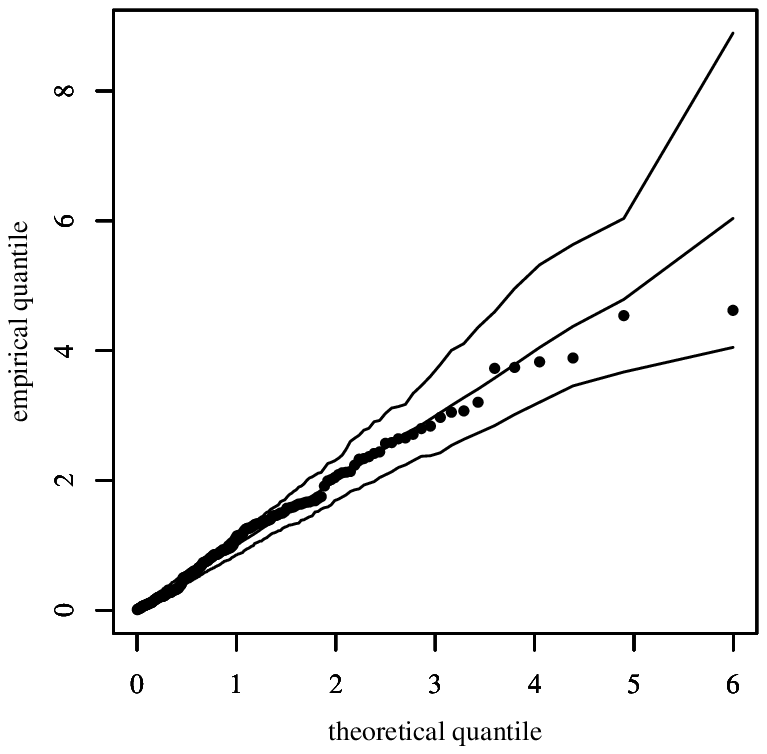}}\\
\subfigure[Unit-log-normal]{\includegraphics[height=4cm,width=4cm]{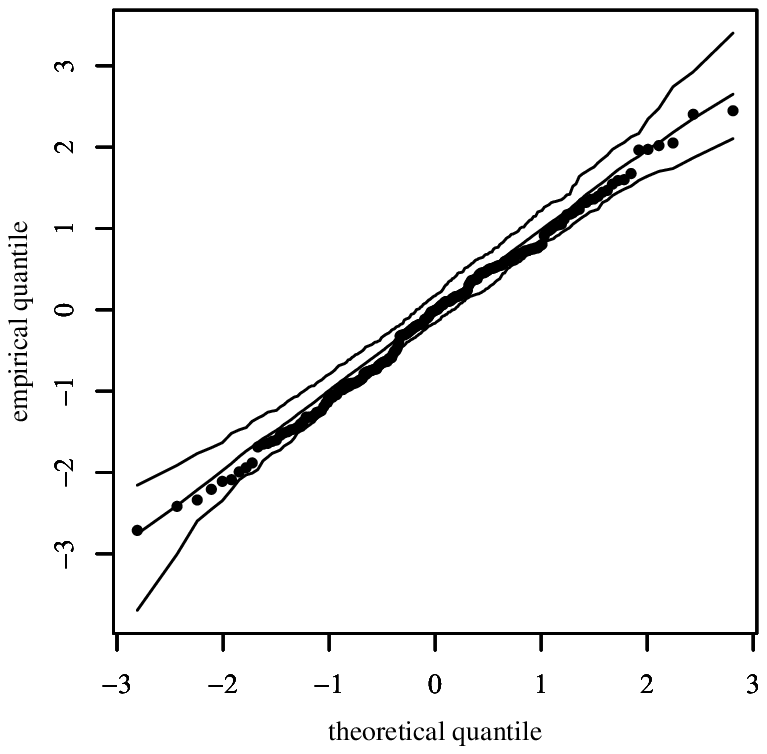}}
\subfigure[Unit-log-Student-$t$]{\includegraphics[height=4cm,width=4cm]{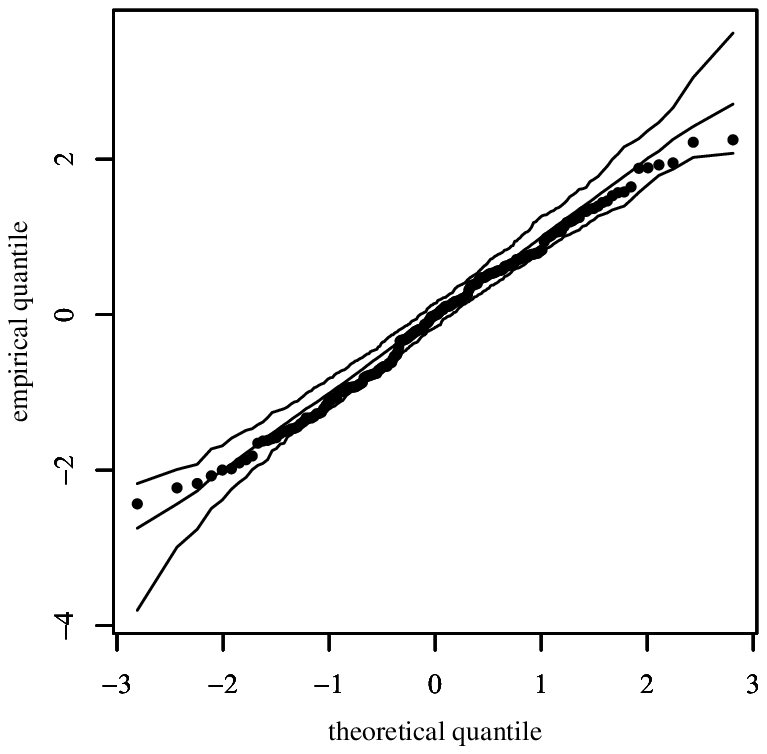}}
\subfigure[Unit-log-Laplace]{\includegraphics[height=4cm,width=4cm]{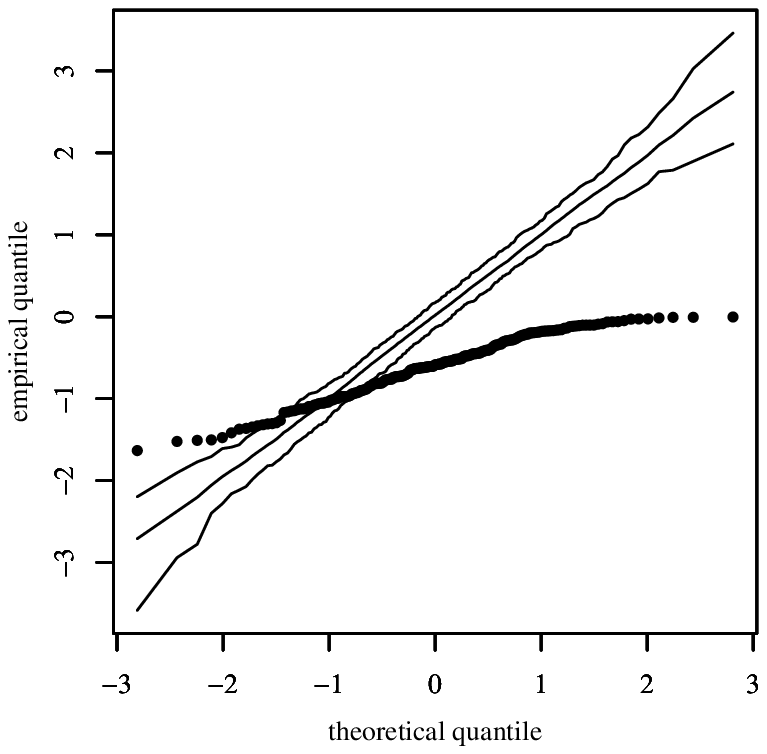}}
\subfigure[Beta]{\includegraphics[height=4cm,width=4cm]{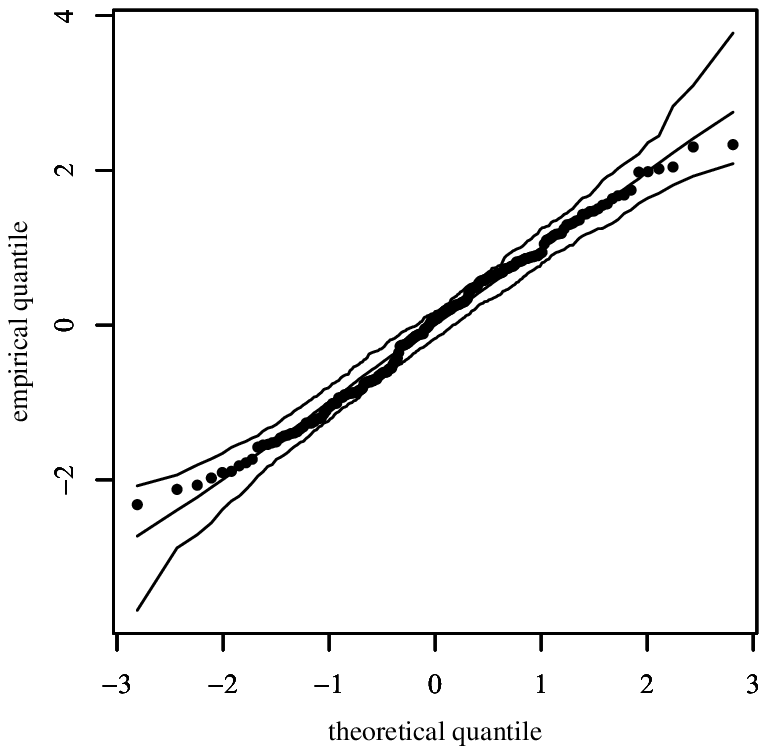}}\\
 \caption{\small {QQ plot and its envelope for the GCS (top) and RQ (bottom) residuals in the indicated model for the internet data.}}
\label{fig:qqplots_internet}
\end{figure}

\begin{figure}[!ht]
\centering
{\includegraphics[height=5cm,width=6cm]{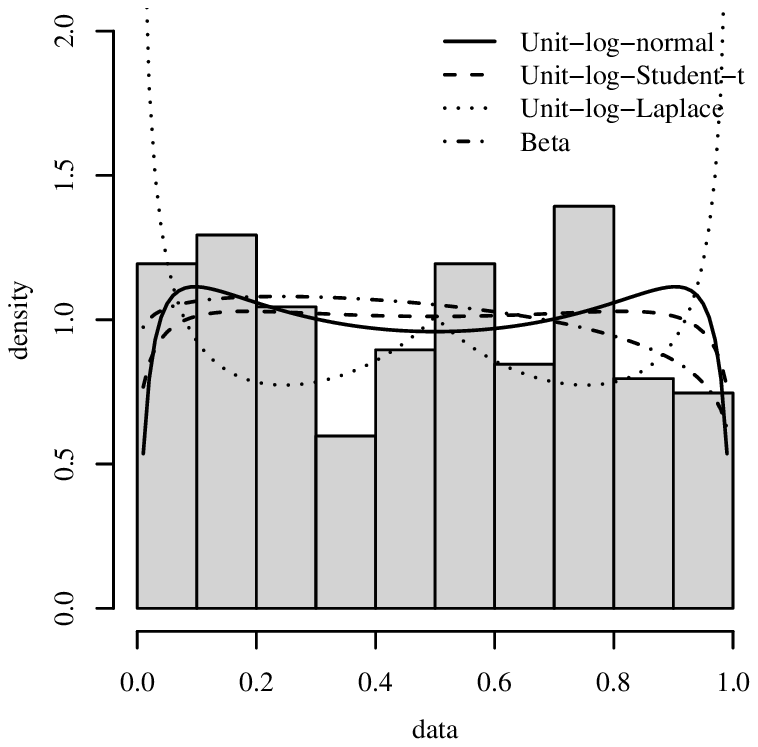}}
{\includegraphics[height=5cm,width=6cm]{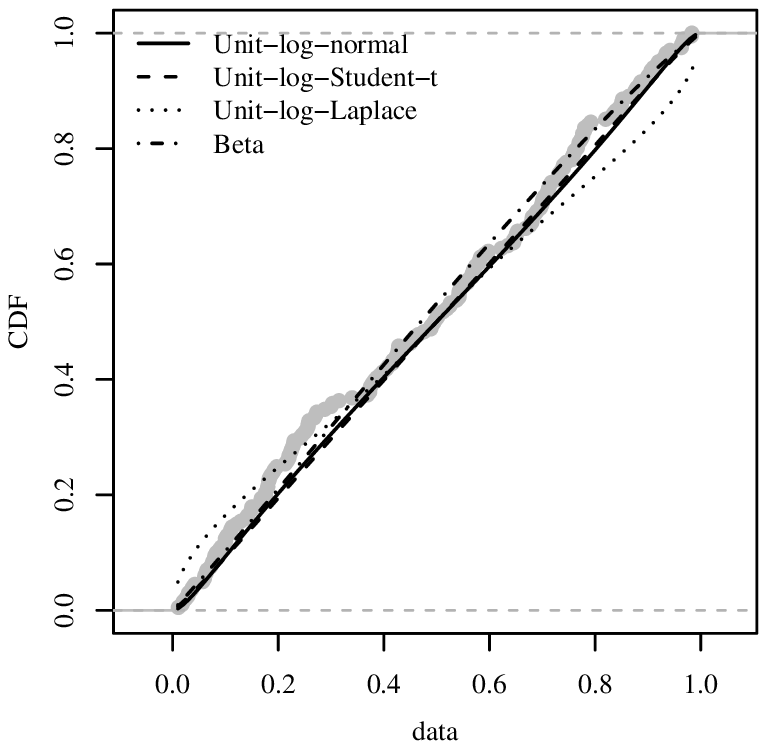}}
\vspace{-0.5cm}
 \caption{\small {Histogram of the internet data with fitted PDFs (left) and fitted CDFs (right).}}
\label{fig:pds_cdfs}
\end{figure}

\section{Concluding remarks} \label{Sec:7}
\noindent


In the present paper, we have introduced a family of flexible models over the interval $(0, 1)$. By suitably defining the density generator, we can transform any distribution over the real line into a bounded distribution over the interval $(0, 1)$. In this way, we have transformed the normal, Student-$t$ and Laplace distributions into  unit-log-normal, unit-log-Student-$t$ and unit-log-Laplace distributions, respectively. These three cases have been studied at great length.  However, many other unit-log-symmetric distributions, such as the unit-log Kotz type, unit-log Pearson Type VII, unit-log hyperbolic, unit-log slash, unit-log-logistic and unit-log-power exponential distributions can all be studied similarly. These may be useful alternatives to ``conventional'' models like beta generalizations, Kumaraswamy and the Topp-Leone distributions.

\paragraph{Acknowledgements}
This study was financed in part by the Coordenação de 
Aperfeiçoamento de Pessoal de Nível Superior - Brasil (CAPES) 
(Finance Code 001). Roberto Vila gratefully acknowledges financial support from FAP-DF, Brazil.

\paragraph{Disclosure statement}
There are no conflicts of interest to disclose.



\begin{thebibliography}{10}


\bibitem[Balakrishnan and Lai, 2009]{Bala2009} 
Balakrishnan, N., Lai, C-D., 
\emph{Continuous Bivariate Distributions}, $2^{\rm nd}$ edition, 
Springer, New York, 2009.



\bibitem[DasGupta, 2008]{dasgupta2008asymptotic} 
DasGupta, A.,
\emph{Asymptotic Theory of Statistics and Probability},
Springer, New York, 2008.

\bibitem[Davison, 2008]{Davison2008} 
Davison, A. C., 
\emph{Statistical Models}, Cambridge University Press, Cambridge, 2008.

\bibitem[Fang et al., 1990]{Fang1990} 
Fang, K. T., Kotz, S., Ng, K. W., 
\emph{Symmetric Multivariate and Related Distributions}, 
Chapman and Hall, London, 1990.

\bibitem[Ferrari and Cribari-Neto, 2004]{FerrariCribari2004}
Ferrari, S., Cribari-Neto, F., 
``Beta regression for modelling rates and proportions",
\emph{Journal of Applied Statistics}, 31, pp. 799-815, 2004.




\bibitem[Kotz et al., 2001]{Kotz2001} 
Kotz, S., Kozubowski, T. J., Podgórski, K., 
\emph{The Laplace Distribution and Generalizations}, 
John Wiley \& Sons, New York, 2001.




\bibitem[Saulo et al., 2022]{Saulo2022} 
Saulo, H., Vila, R., Cordeiro, S. S., Leiva, V., 
``Bivariate symmetric Heckman models and their characterization'',
\emph{Journal of Multivariate Analysis}, 105097, 2022.

\bibitem[Shannon and Weaver, 1949]{Shannon1949} 
Shannon, C. E., Weaver, W.,  
\emph{The Mathematical Theory of Communication}, 
Univ of Illinois Press, 1949. 



	
\end{thebibliography}

\end{document}